\documentclass[pra,aps,superscriptaddress,twocolumn,notitlepage,letter,nopacs,nofootinbib,longbibliography]{revtex4-2}

\usepackage{graphicx,color,amsmath,amsfonts,enumerate,amsthm,amssymb,mathtools,enumitem,thmtools,hyperref,subfigure,mathdots,enumitem,centernot,bm,soul,bbm,stackrel}
\usepackage[capitalise, noabbrev]{cleveref}
\usepackage{mathrsfs}
\usepackage{babel}
\usepackage{stmaryrd}
\usepackage{multirow}
\hypersetup{colorlinks=true,linkcolor=blue,citecolor=blue,urlcolor=blue}

\usepackage{fontawesome}

\usepackage{tikz,pgfplots}
\usetikzlibrary{quantikz}


\def\B{ {\cal B} }
\def\C{ {\mathscr C} }
\def\D{ {\cal D} }
\def\E{ {\cal E} }
\def\F{ {\cal F} } 

\def\H{ {\cal H} }
\def\I{ {\cal I} }

\def\M{ {\cal M} }
\def\N{ {\cal N} }

\def\S{ {\cal S} }
\def\T{ {\cal T} }
\def\U{ {\cal U} }
\def\V{ {\cal V} }

\def\>{\rangle}
\def\<{\langle}
 
\newcommand{\ketbra}[2]{\ensuremath{\left|#1\right\rangle\!\!\left\langle#2\right|}}
\newcommand{\matrixel}[3]{\ensuremath{\left\langle #1 \vphantom{#2#3} \right| #2 \left| #3 \vphantom{#1#2} \right\rangle}}
\newcommand{\tr}[1]{\mathrm{Tr}\left( #1 \right)}
\newcommand{\trr}[2]{\mathrm{Tr}_{#1}\left( #2 \right)}
\newcommand{\iden}{\mathbbm{1}}

\newcommand{\kkhide}[1]{}

\renewcommand{\v}[1]{\ensuremath{\boldsymbol #1}}

\definecolor{ppblue}{RGB}{46,117,182}
\definecolor{ppred}{RGB}{197, 90, 17}



\theoremstyle{plain}
\newtheorem{thm}{Theorem}
\newtheorem{lem}[thm]{Lemma}
\newtheorem{prop}{Proposition}

\theoremstyle{definition}
\newtheorem{defn}{Definition}


\begin{document}

\title{Dephasing superchannels}

\author{Zbigniew Pucha{\l}a}
\affiliation{Institute of Theoretical and Applied Informatics, Polish Academy of Sciences, 44-100 Gliwice, Poland}
\affiliation{Faculty of Physics, Astronomy and Applied Computer Science, Jagiellonian University, 30-348 Krak\'{o}w, Poland}

\author{Kamil Korzekwa}
\affiliation{Faculty of Physics, Astronomy and Applied Computer Science, Jagiellonian University, 30-348 Krak\'{o}w, Poland}

\author{Roberto Salazar}
\affiliation{Faculty of Physics, Astronomy and Applied Computer Science, Jagiellonian University, 30-348 Krak\'{o}w, Poland}

\author{Pawe{\l} Horodecki}
\affiliation{International Centre for Theory of Quantum Technologies, University of Gda\'{n}sk, Wita Stwosza 63, 80-308 Gda\'{n}sk, Poland}

\author{Karol {\.Z}yczkowski}
\affiliation{Faculty of Physics, Astronomy and Applied Computer Science, Jagiellonian University, 30-348 Krak\'{o}w, Poland}
\affiliation{Center for Theoretical Physics, Polish Academy of Sciences, 02-668 Warszawa, Poland}

\begin{abstract}
	We characterise a class of environmental noises that decrease coherent properties of quantum channels by introducing and analysing the properties of dephasing superchannels. These are defined as superchannels that affect only non-classical properties of a quantum channel $\E$, i.e., they leave invariant the transition probabilities induced by $\E$ in the distinguished basis. We prove that such superchannels $\Xi_C$ form a particular subclass of Schur-product supermaps that act on the Jamio\l{}kowski state $J(\E)$ of a channel $\E$ via a Schur product, $J'=J\circ C$. We also find physical realizations of general $\Xi_C$ through a pre- and post-processing employing dephasing channels with memory, and show that memory plays a non-trivial role for quantum systems of dimension $d>2$. Moreover, we prove that coherence generating power of a general quantum channel is a monotone under dephasing superchannels. Finally, we analyse the effect dephasing noise can have on a quantum channel $\E$ by investigating the number of distinguishable channels that $\E$ can be mapped to by a family of dephasing superchannels. More precisely, we upper bound this number in terms of hypothesis testing channel divergence between $\E$ and its fully dephased version, and also relate it to the robustness of coherence of $\E$.
\end{abstract}
\date{August 23, 2021}
\maketitle


\section{Introduction}

Quantum technologies bring the promise of revolutionising the way we process information by employing quantum effects, such as superposition and entanglement to overcome current limitations of information processors~\cite{QT1,QT2}. However, these quantum effects are extremely fragile to noise and any potential quantum advantage disappears in the presence of uncontrolled interactions with the environment~\cite{QEC0}. Thus, the biggest obstacle on the way to constructing practical quantum devices is to harness noise and decoherence effects. Although a lot depends on the development of experimental techniques to control quantum systems, theoretical investigations can also bring progress in that field. One of the main approaches to achieve it is to develop novel quantum error-correcting codes that allow one to protect quantum information against the deteriorating effects of noise~\cite{QEC0,QEC1,QEC2,QEC3}. A complementary path, which we will follow in this paper, is to study the mathematical structure of significant noise models in order to better understand their properties and the way they affect quantum information.

\begin{figure*}[t]
    \centering
    \includegraphics[width=0.98\textwidth]{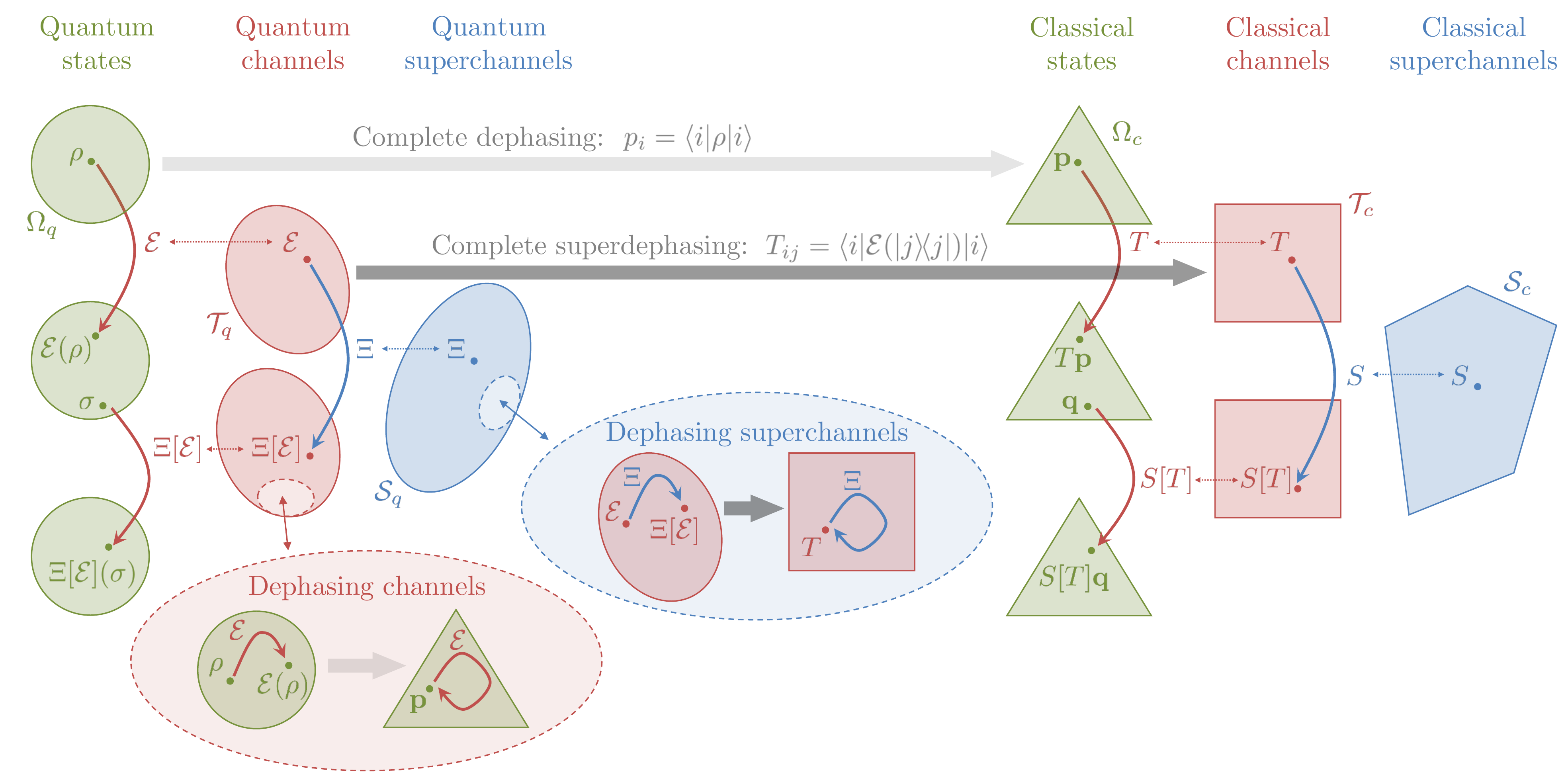}
    \caption{\label{fig:overview}\textbf{Dephasing channels and superchannels.} The set of quantum states $\Omega_q$ (density matrices) is projected onto the set of classical states $\Omega_c$ (probability distributions) via the completely dephasing channel that removes all coherences in the distinguished basis, but keeps the occupations unchanged. General dephasing channels are those maps between quantum states that affect coherences, but do not change occupations, i.e., they keep the classical (completely dephased) version of the state invariant. Analogously, the set of quantum channels $\T_q$ (completely positive trace-preserving maps) is projected onto the set of classical channels $\T_c$ (stochastic matrices) via a completely dephasing superchannel that removes all coherent properties of the channel, but keeps the transition probabilities in the distinguished basis unchanged. General dephasing superchannels form the subset of quantum superchannels $\S_q$ that affect coherent properties of the channel, but do not change transition probabilities, i.e., they keep the classical (completely superdephased) version of the channel invariant. } 
\end{figure*}

An essential class of noises is given by dephasing processes~\cite{DP1,DP2}, i.e., processes that deteriorate the coherence of a quantum system in a distinguished basis, but do not affect occupations. They can be interpreted as a purely quantum noise because classical information processing is unaffected by dephasings. A first rigorous investigation of the capacity of such noise channels, under the name of generalised dephasing channels, was performed by Devetak and Shor \cite{devetak2005capacity}. Further studies along these lines have been performed \cite{d2007quantum,bradler2010trade}, since due to the structural simplicity of dephasing channels, single-letter formulas for their classical and quantum capacities could be found. Additionally, the practical relevance of a special class of dephasing channels was demonstrated in the context of quantum privacy~\cite{levick2017quantum}.

In all these previous works, the focus was on the effect the dephasing noise has on the state of the system. Here, we investigate the effect it has on quantum gates, i.e., we do not ask, how the state of the system gets affected, but how the whole dynamics changes in the presence of a dephasing noise. This forms an extension of previous works, since the effect noise has on a quantum gate $\E$ cannot be simply captured by pre- and post-processing by some noise channels $\N_1$ and $\N_2$:
\begin{equation}
	\label{eq:superchannel_simple}
	\begin{quantikz}
		& \gate{\tilde{\E}} &\qw
	\end{quantikz}
		=
	\begin{quantikz}
		\qw&\gate{\N_1}& \gate{\E} &\gate{\N_2}&\qw
	\end{quantikz}.
\end{equation}
This is due to potential correlations between the input and output states of the investigated gate $\E$ mediated by the environment, and so the general noisy version $\tilde{\E}$ of the gate $\E$ has the following form
\begin{equation}
	\label{eq:superchannel_general}
	\begin{quantikz}
		& \gate{\tilde{\E}} &\qw
	\end{quantikz}
		=
	\begin{quantikz}
		&[-0.1cm]	\gate[wires=2]{\N_1}&[-0.1cm] \gate{\E} &[-0.1cm] \gate[wires=2]{\N_2}&[-0.6cm]\qw \\[-0.5cm]
		\ketbra{0}{0}	&  	 & \qw 	&	& [-0.6cm]	\trash{\text{discard}}
	\end{quantikz}\!\!\!.
\end{equation}
A mathematical concept that can capture such a general effect of noise on quantum gates is a quantum superchannel~\cite{chiribella2008transforming}, also called a supermap and used to describe dynamics in generalized quantum theories~\cite{Zy08}. In this article, we introduce the notion of \emph{dephasing superchannels} as an analogue of dephasing channels: such superchannels should not affect the classical properties of the channel $\E$ they act upon, i.e., the transition probabilities induced by $\E$ in the distinguished basis should be invariant. We illustrate these concepts in Fig.~\ref{fig:overview}.

We first describe the mathematical structure of dephasing superchannels by relating them to a particular subset of Schur-product maps on Jamio{\l}kowski states. We also provide a physical realisation of such superchannels in the form of pre- and post-processing employing dephasing channels with memory, i.e., in the form of Eq.~\eqref{eq:superchannel_general} with $\N_1$ and $\N_2$ being directly related to dephasing channels. Moreover, we explicitly demonstrate that for system's dimension $d\geq 3$ this memory effect extends the set of possible dephasing noises. After describing these basic properties of dephasing superchannels, we then focus on the effect they have on coherent properties of quantum channels. We start by proving that the cohering power of a quantum channel always decreases under dephasing superchannels. We then proceed to analyze, how strongly a quantum channel can be perturbed by dephasing superchannels. More precisely, we provide an upper-bound for the number of distinguishable (orthogonal) channels that a given channel $\E$ can be steered to by dephasing noises, where the bound is given by a particular coherence measure of a channel $\E$. Finally, we give a complementary perspective on that problem, where coherence of a channel $\E$ can be seen as a resource for distinguishing between various dephasing superchannels.

The paper is organised as follows. In Sec.~\ref{sec:setting}, we recall the basic properties of quantum states, channels and superchannels. We also revisit the concept of a dephasing channel and relate it to the notion of a Schur-product superoperator. Then, in Sec.~\ref{sec:structure}, we introduce dephasing superchannels, present their mathematical structure and discuss physical realisations. The following Sec.~\ref{sec:coherence} contains the analysis of the interplay between the action of dephasing superchannels and the coherent properties of quantum channels. Finally, Sec.~\ref{sec:outlook} contains conclusions and outlook for future work.


\section{Setting the scene}
\label{sec:setting}


\subsection{Quantum states, channels and superchannels}

A state of a $d$-dimensional quantum system is represented by a density operator $\rho$ acting on a $d$-dimensional Hilbert space $\H_d$. The set of density operators $\Omega_q$ forms a subset of bounded operators $\B(\H_d)$ that are positive semi-definite, $\rho\geq 0$, and have unit trace, $\tr{\rho}=1$. General linear transformations \mbox{$\B(\H_d)\rightarrow\B(\H_d)$} are called \emph{superoperators}, while their subset $\T_q$ corresponding to physical evolutions of quantum states is known as \emph{quantum channels}. These model all quantum gates and form a subset of superoperators that are completely positive (CP) and trace-preserving (TP). The evolution of a closed quantum system is described by a unitary channel, $\U(\cdot)=U(\cdot)U^\dagger$, with a unitary matrix $U$ of size~$d$.

A natural representation of a quantum channel $\E$ is given by a $d^2\times d^2$ matrix $\Phi(\E)$:
\begin{equation}
    \Phi(\E)_{ij,kl}:= \tr{\ketbra{i}{j}\E(\ketbra{k}{l})}
\end{equation}
with $\{\ket{i}\}_{i=1}^d$ being some fixed basis of $\H_d$. However, in this paper we will mostly use the following three alternative representations~\cite{nielsen2010quantum}, each useful for a different reason. First, through Stinespring dilation, every quantum channel can be realised by a unitary dynamics of an extended system followed by discarding the ancillary system:
\begin{equation}
    \mathcal{E}(\cdot)=\textrm{Tr}_{2}(U((\cdot)\otimes\ketbra{0}{0})U^{\dagger}).
    \label{eq:Stin}
\end{equation}
This gives a clear physical interpretation of the action of~$\E$. Second, one can employ the operator-sum representation of $\E$, which is particularly useful for performing calculations, and write
\begin{equation}
    \E(\cdot)=\sum_{i=1}^{d}K_{i}\rho K_{i}^{\dagger}\label{eq:kraus0},\quad \sum_{i=1}^{d}K_{i}^{\dagger}K_{i}=\iden,
\end{equation}
with $\{ K_{i}\}_{i=1}^{d}$ known as the \emph{Kraus operators} and $\iden$ denoting the identity matrix of size $d$. Finally, through Choi-Jamio{\l}kowski isomorphism~\cite{jamiolkowski1972linear,choi1975completely} one can represent a channel $\E$ via its Jamio{\l}kowski matrix $J(\E)$:
\begin{equation}
    J(\E):=\E\otimes \I (\ketbra{\Psi}{\Psi}),\quad \ket{\Psi}:=\frac{1}{\sqrt{d}}\sum_{i=1}^d \ket{ii},
\end{equation}
with $\I$ denoting the identity channel on the ancillary system of dimension $d$. Crucially, the complete positivity of $\E$ is equivalent to $J(\E)\geq 0$, while the trace-preserving condition gets mapped to $\trr{1}{J(\E)}=\iden/d$. Moreover, the Jamio{\l}kowski state $J(\E)$ (also called dynamical matrix or Choi matrix when unnormalised),
can be related to the matrix representation $\Phi(\E)$ via a reshuffling operation, which reorders the entries of the matrix,
\begin{equation}
    \Phi(\E)_{ij,kl}=dJ(\E)_{ik,jl}.\label{Phi2}
\end{equation}

In order to investigate the effect of dephasing noise on quantum gates, we will need appropriate maps describing transformations of quantum channels into quantum channels. General linear
maps between superoperators,
\begin{equation}
    \Xi:[\B(\H_{d})\rightarrow\B(\H_{d})]\rightarrow[\B(\H_{d})\rightarrow\B(\H_{d})]
\end{equation}
will be called \emph{supermaps}, while their subset $\S_q$ corresponding to maps between quantum channels is known as \emph{superchannels}~\cite{chiribella2008transforming}. A general superchannel has a standard physical realization in terms of pre- and post-processing with a memory system as in Eq.~\eqref{eq:superchannel_general}. As with quantum channels, there are several useful representations of quantum superchannels~\cite{gour2019comparison}, but in our work we will only employ the analogue of the Choi-Jamio{\l}kowski representation. Denoting the superoperator basis elements on which channels can be spanned by
\begin{equation}
	\E_{(ij),(kl)}(\cdot) = \bra{k}\cdot\ket{l} \  \ket{i}\!\bra{j},
\end{equation}
the Jamio\l{}kowski matrix of a general superchannel $\Xi$ is given by
\begin{align}
	{\bf J}_{\Xi} &= \sum_{ijkl} J(\mathcal{E}_{(ij), (kl)}) \otimes 
	J(\Xi[\mathcal{E}_{(ij), (kl)}]). \label{eq:super_jamiolkowski}
\end{align}
Moreover, the action of a superchannel $\Xi$ can be expressed through its Jamio\l{}kowski matrix as
\begin{equation} \label{eqn:super-action}
	J(\Xi[\E]) = d^2\trr{2}{  {\bf J}_{\Xi}   (\iden \otimes J(\E)^\top)}.
\end{equation}


\subsection{Dephasing channels}
\label{sec:schur_channels}

In order to investigate dephasing superchannels, we first need to recall the notion of dephasing channels and describe their known properties.

\begin{defn}[Dephasing channel]
	A quantum channel $\D$ is called a dephasing channel if the occupations in the distinguished basis are invariant under $\D$:
	\begin{equation}
		\forall~\rho,\ket{i}:\quad \bra{i}\D(\rho)\ket{i}=\bra{i}\rho\ket{i}.
	\end{equation}	
\end{defn}

The above definition has a clear physical interpretation. However, in order to study dephasing channels and superchannels, it is convenient to introduce the central mathematical concept of Schur product
(also called Hadamard product or entry-wise product) between operators for a fixed distinguished basis.

\begin{defn}[Schur-product superoperator]
	A superoperator $\D_C$ is called Schur-product if for all $X\in\B_d$ we have
	\begin{equation}
		\D_C(X) = \sum_{i,j=1}^d X_{ij} C_{ij} \ketbra{i}{j}=:X\circ C,
	\end{equation}
	where $\{\ket{i}\}_{i=1}^d$ is the distinguished basis, $C$ is a matrix of size $d$, and $\circ$ denotes the Schur product in the distinguished basis.
\end{defn}

We now have the following known result~\cite{kye1995positive,li1997special} that specifies the properties of $C$ for $\D_C$ to be a quantum channel, and relates Schur-product superoperators with dephasing channels.

\begin{lem}[Schur-product channels]
	\label{lem:schur_chan}
	A Schur-product superoperator $\D_C$ is a quantum channel if and only if $C$ is a correlation matrix (positive matrix with $C_{ii}=1$ for all~$i$). Moreover, $\D$ is a dephasing channel if and only if it is a Schur-product channel.
\end{lem}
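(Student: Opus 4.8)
The plan is to split the lemma into its two assertions and prove each through the Choi--Jamio\l{}kowski picture, which turns both the channel conditions and the dephasing constraint into statements about the positive matrix $J(\D)$. The computation I would carry out first is the Jamio\l{}kowski matrix of a Schur-product superoperator. Since $\D_C(\ketbra{k}{l}) = C_{kl}\ketbra{k}{l}$, one finds
\begin{equation}
	J(\D_C) = \frac{1}{d}\sum_{k,l} C_{kl}\,\ketbra{kk}{ll},
\end{equation}
so that $J(\D_C)$ is, up to the factor $1/d$, just the matrix $C$ supported on the $d$-dimensional subspace $\mathrm{span}\{\ket{kk}\}$ and vanishing on its complement. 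This single identity drives the whole argument.

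For the first assertion I would treat complete positivity and trace preservation separately. Using the equivalence $J(\D_C)\geq 0 \iff \text{CP}$ recalled above, the displayed form shows immediately that complete positivity is equivalent to $C\geq 0$, since restricting a positive operator to the invariant subspace $\mathrm{span}\{\ket{kk}\}$ preserves positivity and conversely. For trace preservation, $\tr{\D_C(X)} = \sum_i X_{ii}C_{ii}$, so demanding this equal $\tr{X}$ for all $X$ forces $C_{ii}=1$ for every $i$. Together these two facts state precisely that $\D_C$ is a quantum channel if and only if $C$ is a correlation matrix.

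For the second assertion, the easy direction is that any Schur-product channel is dephasing: $\bra{i}\D_C(\rho)\ket{i} = \rho_{ii}C_{ii} = \rho_{ii}$ because $C_{ii}=1$. The converse is the crux. Let $\D$ be a dephasing channel and write $\bra{ik}J(\D)\ket{jl} = \tfrac1d \bra{i}\D(\ketbra{k}{l})\ket{j}$. Taking $\rho=\ketbra{k}{k}$ in the dephasing condition gives $\bra{i}\D(\ketbra{k}{k})\ket{i}=\delta_{ik}$, i.e.\ the diagonal entries of $J(\D)$ obey $\bra{ik}J(\D)\ket{ik} = \tfrac1d\delta_{ik}$; in particular every diagonal entry with $i\neq k$ vanishes. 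Here I would invoke the standard fact that a positive semidefinite matrix with a vanishing diagonal entry has the entire corresponding row and column equal to zero. Applying this to $J(\D)\geq 0$ (valid because $\D$ is a channel) kills every matrix element $\bra{ik}J(\D)\ket{jl}$ unless $i=k$ and $j=l$, so $J(\D)$ is supported on $\mathrm{span}\{\ket{ii}\}$. Setting $C_{ij}:=\bra{i}\D(\ketbra{i}{j})\ket{j}$ then yields $J(\D)=\tfrac1d\sum_{ij}C_{ij}\ketbra{ii}{jj}$, which matches the Jamio\l{}kowski matrix of $\D_C$ computed above, and injectivity of the Choi--Jamio\l{}kowski isomorphism gives $\D=\D_C$.

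The main obstacle is exactly this last step: one must convert the seemingly weak, diagonal-only dephasing constraint into a statement fully determining the off-diagonal structure of $\D$. The positivity of the Jamio\l{}kowski matrix does the heavy lifting, since without it a vanishing diagonal would not force the rows and columns to vanish, and many non-Schur maps would still preserve occupations. I would therefore stress that the complete positivity built into ``quantum channel'' is essential, and that the positive-semidefinite zero-diagonal lemma is the key technical ingredient tying the two halves of the statement together.
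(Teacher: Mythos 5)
Your proof is correct and takes essentially the same route as the paper's: compute \mbox{$J(\D_C)=\frac{1}{d}\sum_{ij}C_{ij}\ketbra{ii}{jj}$}, identify CP/TP of $\D_C$ with $C$ being a correlation matrix, and for the converse use the dephasing condition to pin down the diagonal \mbox{$J(\D)_{ij,ij}=\delta_{ij}/d$} and then positivity of $J(\D)$ to force the Schur-product form. The only differences are cosmetic: you check trace preservation directly on $\tr{\D_C(X)}$ rather than via \mbox{$\trr{1}{J(\D_C)}=\iden/d$}, and you spell out the zero-diagonal/zero-row-and-column lemma for positive semi-definite matrices that the paper invokes implicitly.
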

\begin{proof}
	Direct calculation shows that
	\begin{equation}
		\label{eq:choi_schur}
		J(\D_C)=\frac{1}{d}\sum_{ij} C_{ij}\ketbra{ii}{jj}.
	\end{equation}
	Thus, the positivity of $J(\D_C)$ is equivalent to the positivity of $C$, and the trace preserving condition, \mbox{$\trr{1}{J(\D_C)}=\iden/d$}, is equivalent to $C_{ii}=1$. Therefore, $C$ is a correlation matrix, and $\D_C$ clearly preserves the diagonal. Now, assume that some channel $\D$ preserves the diagonal. We then have
	\begin{equation}
		J(\D)_{ij,ij}=\frac{1}{d}\bra{i}\D(\ketbra{j}{j})\ket{i}=\frac{\delta_{ij}}{d}.
	\end{equation}
	From the above and the positivity of $J(\D)$ we conclude that $J(\D)$ has the form from Eq.~\eqref{eq:choi_schur}, and thus is a Schur-product channel.
\end{proof}

It is also known how to physically realise a general Schur-product channel.

\begin{lem}[Physical realisation of $\D_C$]
	Every Schur-product channel can be written as a unitary processing with an ancillary system of dimension $d$ as follows:
	\begin{equation}
		\label{eq:diagram_channel}
		\begin{quantikz}
			& \gate{\D_C} &\qw
		\end{quantikz}
		=
		\begin{quantikz}
			&[-0.1cm]	\gate[wires=2]{\U}&[-0.1cm] \qw \\[-0.5cm]
			\ketbra{0}{0}	&&  	[-0.6cm]	\trash{\text{\emph{discard}}}
		\end{quantikz},
	\end{equation}
	where
	\begin{equation}
		\label{eq:U_channel}
		\U(\cdot)=U(\cdot)U^\dagger,\quad U=\sum_{i=1}^d \ketbra{i}{i}\otimes U_i, 
	\end{equation}
	with $\{U_i\}$ being arbitrary unitaries of size $d$. The relation between $C$ and these unitaries is given by
	\begin{equation}
		\label{eq:C_chanel}
		C_{ij} = \bra{0} U_j^\dagger  U_i \ket{0}.
	\end{equation}
\end{lem}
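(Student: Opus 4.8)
The plan is to prove the stated equivalence in two steps: first checking that the proposed dilation realizes a Schur-product channel with $C$ given by Eq.~\eqref{eq:C_chanel}, and then showing conversely that every Schur-product channel (equivalently, by Lemma~\ref{lem:schur_chan}, every correlation matrix $C$) is obtained from some choice of controlled unitaries $\{U_i\}$.

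For the forward direction, I would substitute the controlled unitary $U=\sum_i\ketbra{i}{i}\otimes U_i$ from Eq.~\eqref{eq:U_channel} into the Stinespring form of Eq.~\eqref{eq:Stin} (depicted in Eq.~\eqref{eq:diagram_channel}) and compute, for an arbitrary input $X$,
\[
U(X\otimes\ketbra{0}{0})U^\dagger = \sum_{ij} X_{ij}\,\ketbra{i}{j}\otimes U_i\ketbra{0}{0}U_j^\dagger .
\]
Tracing out the ancilla and using $\tr{U_i\ketbra{0}{0}U_j^\dagger}=\bra{0}U_j^\dagger U_i\ket{0}$ gives $\D_C(X)=\sum_{ij}X_{ij}C_{ij}\ketbra{i}{j}$ with $C_{ij}=\bra{0}U_j^\dagger U_i\ket{0}$, which is exactly the Schur-product action together with Eq.~\eqref{eq:C_chanel}. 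This simultaneously shows that the dilation is a valid channel and identifies its matrix $C$.

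For the converse, the key observation is that, writing $\ket{v_i}:=U_i\ket{0}$, the entries read $C_{ij}=\bra{0}U_j^\dagger U_i\ket{0}=\langle v_j | v_i\rangle$, so realizing a prescribed $C$ amounts to producing unit vectors $\{\ket{v_i}\}_{i=1}^d$ in $\H_d$ whose pairwise overlaps form $C$, i.e.\ whose Gram matrix equals $C$ (up to the standard conjugation convention). Since by Lemma~\ref{lem:schur_chan} the matrix $C$ is positive semi-definite, I would factorize $\bar C = W^\dagger W$ (via a spectral or Cholesky decomposition) and take $\ket{v_i}$ to be the $i$-th column of $W$, which yields $\langle v_j | v_i\rangle = C_{ij}$; the condition $C_{ii}=1$ then guarantees that each $\ket{v_i}$ is normalized, and since $C$ is $d\times d$ (hence of rank at most $d$) the vectors fit inside the $d$-dimensional ancilla. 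Finally, for each $i$ I would let $U_i$ be any unitary sending $\ket{0}$ to $\ket{v_i}$, which exists because every unit vector can be completed to an orthonormal basis.

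All the computations here are routine, and I do not expect a genuine obstacle. The only point requiring a little care is the bookkeeping of complex conjugation and transposition when identifying $C$ with the Gram matrix of the $\ket{v_i}$ (whether one should factorize $C$, $\bar C$, or $C^\top$); this is harmless since all of these are positive semi-definite correlation matrices, so the existence of the required vectors, and hence of the unitaries $U_i$, is unaffected.
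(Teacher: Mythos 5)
Your proposal is correct and follows essentially the same route as the paper: a direct computation showing the controlled-unitary dilation yields $\D_C(X)=X\circ C$ with $C_{ij}=\bra{0}U_j^\dagger U_i\ket{0}$, and a converse exploiting that a correlation matrix is a Gram matrix of unit vectors which can each be reached from $\ket{0}$ by a unitary. The only difference is that you spell out the Gram-factorization and conjugation bookkeeping that the paper leaves implicit, which is fine.
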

\begin{proof}
	First, we assume the $\D_C$ has the form from Eq.~\eqref{eq:diagram_channel} and calculate
	\begin{align}
		\D_C(\rho) &= \sum_{ij}\trr{2}{\ketbra{i}{i}\rho\ketbra{j}{j}\otimes U_i\ketbra{0}{0}U_j^\dagger}
		\nonumber\\
		&= \sum_{ij} \rho_{ij} C_{ij} \ketbra{i}{j}=\rho\circ C.
	\end{align}
	Next, assume $\D_C$ is a Schur-product channel. Since $C$ is a correlation matrix it can be written as a Gram matrix:
	\begin{equation}
		C_{ij}=\braket{\psi_j}{\psi_i}=:\bra{0}U_j^\dagger U_i\ket{0}
	\end{equation}
	for some choice of unitary matrices $\{U_i\}$.
\end{proof}

As a direct corollary of the above lemma, we can also obtain the following Kraus representation of a general Schur-product channel:
\begin{equation}
	\label{eq:kraus}
	\D_C(\cdot)=\sum_{k=1}^d K_k (\cdot)K_k^\dagger, \quad K_k = \sum_{i=1}^d  \braket{k}{\psi_i} \ketbra{i}{i},
\end{equation}
where $\ket{\psi_i}:=U_i\ket{0}$. 

Now, to see even more clearly the physical relevance of Schur-product channels and to understand why they are called dephasing channels, let us relate their action to the von Neumann measurement scheme. Let $\{\ket{i}\}$ denote the eigenstates of the measured observable and $\ket{\phi}$ the state of the system before the measurement. The von Neumann measurement scheme then involves a measuring apparatus in the initial state $\ket{0}$, its unitary interaction $U$ with the system,
\begin{equation}
	\ket{\phi}\otimes \ket{0} = \sum_i \braket{i}{\phi} \ket{i}\otimes \ket{0} \xrightarrow{~U~}  \sum_i \braket{i}{\phi} ~\ket{i}\otimes \ket{\psi_i},
\end{equation}
and the final projective measurement of the apparatus. From the above it is clear that $U$ has exactly the same form as the unitary in Eq.~\eqref{eq:U_channel}, and so the action of the von Neumann measurement on the measured system (after discarding the result) is given by a Schur-product channel $\D_C$. The correlation matrix $C$ describes on the one hand how much information about the system is encoded in the apparatus; and on the other, how much it disturbs (dephases) the system.

Finally, let us note two important properties of Schur-product channels in relation to resources of coherence and entanglement. First, by noting that each Kraus operator of $\D_C$ maps an incoherent state into an (unnormalized) incoherent state, we conclude that Schur-product channels belong to the set of incoherent operations~\cite{baumgratz2014quantifying}. Actually, they also belong to smaller subsets of incoherent operations, e.g. strictly incoherent operations and phase-covariant operations~\cite{streltsov2017colloquium}. As a result, all meaningful coherence  measures  cannot increase under the action of Schur-product channels which is yet another way to justify denoting them as dephasing channels. Second, by direct calculation, one can show that the channel complementary to $\D_C$ is a measure and prepare (and so an entanglement-breaking) channel:
\begin{align}
	\label{eq:complementary}
	\D^c_C(\rho)&:=\trr{1}{U (\rho\otimes \ketbra{0}{0})U^\dagger}=\sum_{i=1}^d \rho_{ii} \ketbra{\psi_i}{\psi_i}.
\end{align}


\section{Structure and properties of dephasing superchannels}
\label{sec:structure}


The central object investigated in this paper is defined as follows.

\begin{defn}[Dephasing superchannel]
    A quantum superchannel $\Xi$ is called a dephasing superchannel if the transition probabilities in the distinguished basis are invariant under $\Xi$:
	\begin{equation}
    	\forall~\E,\ket{i},\ket{j}:\quad \bra{i}\Xi[\E](\ketbra{j}{j})\ket{i}=\bra{i}\E(\ketbra{j}{j})\ket{i}.
	\end{equation}	
\end{defn}

In what follows we first identify the above class of superchannels with a particular family of Schur-product supermaps and present physical realisation of every such superchannel. We also explain how noises generated by dephasing superchannels are more general than the ones generated by pre- and post-processing with dephasing channels. We finish this section by describing the particularly simple effect that dephasing superchannels have on dephasing channels.

\subsection{Equivalence with Schur-product superchannels}

In analogy to Schur-product superoperators, one can introduce the concept of Schur-product supermaps.

\begin{defn}[Schur-product supermaps]
	A supermap $\Xi_C$ is called Schur-product if for all \mbox{$X\in[\B(\H_d)\rightarrow\B(\H_d)]$} we have
	\begin{equation}
	\!\!	J(\Xi_C[X]) = \sum_{ijkl} J(X)_{ij,kl} C_{ij,kl} \ketbra{ij}{kl}= J(X)\circ C,
	\end{equation}
	where $J(X)$ is the Jamio\l{}kowski operator of $X$, $\{\ket{ij}\}$ is the distinguished basis, $C$ is a matrix of size $d^2$, and $\circ$ denotes Schur product in the distinguished basis.
\end{defn}
The above definition does not guarantee that the output of $\Xi_C$ will be completely positive and trace-preserving. Thus, a constraint on $C$ is given in the following proposition that also establishes equivalence between dephasing superchannels and Schur-product superchannels.
	
\begin{prop}[Schur-product superchannels]
	\label{prop:schur_super}
	A Schur-product supermap $\Xi_C$ is a quantum superchannel if and only if $C$ is a correlation matrix (positive matrix with all diagonal entries equal to 1) of the following form
	\begin{equation}
	\label{eq:corr_form}
	C=\left[\begin{array}{cccc}
	{C_{11}} & {C_{12}} & {\ldots} & {C_{1 N}} \\
	{C_{21}} & {C_{11}} & {\ldots} & {C_{2 N}} \\
	{\vdots} & {\vdots} & {\ddots} & {\vdots} \\
	{C_{N 1}} & {C_{N 2}} & {\ldots} & {C_{11}}
	\end{array}\right],
	\end{equation} 
	where $C_{ij}$ are $d\times d$ matrices and $C_{11}$ is a correlation matrix. Moreover, $\Xi$ is a dephasing superchannel if and only if it is a Schur-product superchannel.
\end{prop}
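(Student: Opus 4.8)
The plan is to mirror the proof of \cref{lem:schur_chan}, lifting every step from channels and their Choi matrices to superchannels and their Jamio\l{}kowski matrices $\mathbf{J}_\Xi$. The computational backbone is an explicit formula for $\mathbf{J}_{\Xi_C}$. A direct calculation from the definitions gives $J(\E_{(ij),(kl)})=\frac{1}{d}\ket{ik}\bra{jl}$, and combining this with $J(\Xi_C[\E_{(ij),(kl)}])=J(\E_{(ij),(kl)})\circ C$ (so that the Schur product picks out the single entry $C_{(ik),(jl)}$) collapses the sum in \eqref{eq:super_jamiolkowski} to the ``doubled-diagonal'' form
\begin{equation}
\mathbf{J}_{\Xi_C}=\frac{1}{d^2}\sum_{a,b}C_{ab}\,\ket{aa}\bra{bb}=\frac{1}{d^2}\,V C V^\dagger,
\end{equation}
where $a,b$ run over the composite indices $\ket{ik}$, $\ket{aa}:=\ket a\otimes\ket a$, and $V=\sum_a\ket{aa}\bra a$ is an isometry. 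Since $V^\dagger V=\iden$, one has $\mathbf{J}_{\Xi_C}\ge 0\Leftrightarrow C\ge 0$, which settles the complete-positivity half of the first claim.

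For the trace-preservation half I would impose $\trr{1}{J(\Xi_C[\E])}=\iden/d$ for every channel $\E$. Writing $J(\Xi_C[\E])=J(\E)\circ C$ and tracing out the output index, and using $\sum_i J(\E)_{(ij),(il)}=\delta_{jl}/d$, this reduces to $\sum_i J(\E)_{(ij),(il)}\,(C_{(ij),(il)}-1)=0$ for all $j,l$ and all $\E$. Choosing classical channels whose Choi populations concentrate on a single output forces $C_{(ij),(ij)}=1$, while letting the coherences $(J(\E)_{(ij),(il)})_i$ sweep out the hyperplane $\{v:\sum_i v_i=0\}$ for $j\ne l$ forces $C_{(ij),(il)}$ to be independent of $i$. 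Grouping $C$ into $d\times d$ blocks labelled by the output index, this is exactly the statement that all diagonal blocks coincide with a common matrix, i.e.\ the form \eqref{eq:corr_form}; positivity of $C$ then makes this common diagonal block $C_{11}$ (a principal submatrix with unit diagonal) a correlation matrix. The easy direction of the second claim is now immediate: unit diagonal of $C$ leaves each $J(\E)_{(ij),(ij)}$, hence every transition probability $\bra i\E(\ketbra jj)\ket i=dJ(\E)_{(ij),(ij)}$, invariant, so every Schur-product superchannel is dephasing.

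The substantive direction is the converse: a dephasing superchannel $\Xi$ must be of Schur-product form. Here I would rewrite the invariance of transition probabilities as the linear condition $J(\Xi[\E])_{cc}=J(\E)_{cc}$ for all channels $\E$ (with $c$ the composite index) and feed it into the action formula \eqref{eqn:super-action}, obtaining $d^2\sum_{\alpha\beta}[\mathbf{J}_\Xi]_{(c\alpha),(c\beta)}J(\E)_{\alpha\beta}=J(\E)_{cc}$ for all $\E$. The main obstacle is that channel Choi matrices do not span the whole space: their linear span is only $\{M:\trr{1}{M}\propto\iden\}$, of codimension $d^2-1$, so this does not at once pin the block $[\mathbf{J}_\Xi]_{(c\,\cdot),(c\,\cdot)}$ to $\tfrac{1}{d^2}\ketbra cc$. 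The residual freedom is precisely an operator in the orthogonal complement of the channel set, of the form $\iden\otimes Z^{(c)}$ with $Z^{(c)}$ traceless (identity on the output factor, $Z^{(c)}$ on the reference factor). I would close this gap using positivity of $\mathbf{J}_\Xi$: for $\alpha=(i,k)\ne c$ the diagonal entry reads $[\mathbf{J}_\Xi]_{(c,ik),(c,ik)}=\tfrac{1}{d^2}Z^{(c)}_{kk}\ge 0$, and since $\sum_k Z^{(c)}_{kk}=0$ this forces $Z^{(c)}_{kk}=0$. Hence $[\mathbf{J}_\Xi]_{(c\gamma),(c\gamma)}=0$ for all $\gamma\ne c$, and positivity of $\mathbf{J}_\Xi$ then annihilates every vector $\ket c\otimes\ket\gamma$ with $\gamma\ne c$. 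Therefore $\mathbf{J}_\Xi$ is supported on $\operatorname{span}\{\ket{cc}\}$, i.e.\ $\mathbf{J}_\Xi=\frac{1}{d^2}\sum_{cc'}C_{cc'}\ket{cc}\bra{c'c'}=\mathbf{J}_{\Xi_C}$, so $\Xi=\Xi_C$; combined with the first claim, $C$ is automatically a correlation matrix of the form \eqref{eq:corr_form}.

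I expect the tracelessness-plus-positivity step to be the crux, as it is the only place where the codimension defect of the channel set is overcome, and it is the exact analogue of the step in \cref{lem:schur_chan} where vanishing diagonal entries plus $J(\D)\ge 0$ kill the unwanted matrix elements. The remaining work is bookkeeping, except that one should also check that the trace condition derived above coincides with, rather than being weaker than, the full causality constraints defining a superchannel; for the Schur structure this follows because the Schur product acts locally in the distinguished basis.
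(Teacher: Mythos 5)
Your proposal is correct and follows essentially the same route as the paper's proof: the same doubled-diagonal formula for ${\bf J}_{\Xi_C}$ (Eq.~\eqref{eqn:super=jamiolkowski}), the same choice of test channels to extract the trace-preservation constraints (unit diagonal and $i$-independence of the blocks $C_{ik,il}$), and the same positivity-plus-$|A_{ij}|^2\leq A_{ii}A_{jj}$ argument to force a dephasing superchannel's Jamio\l{}kowski matrix onto the span of $\{\ket{cc}\}$. The one point where you go beyond the paper is in explicitly closing the residual freedom $\iden\otimes Z$ (with $Z$ traceless) that remains because channel Choi matrices span only a proper subspace---a step the paper glosses over when asserting Eq.~\eqref{eqn:super-blocks}---so your version is, if anything, more complete there.
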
	

\begin{proof}

We start by noting that for any correlation matrix~$C$ (positive semi-definite with ones on diagonal) of the form from Eq.~\eqref{eq:corr_form}, and any Jamio{\l}kowski operator $J(\E)$ of a CPTP map $\E$, \mbox{$J(\E) \circ C$} is also a Jamio{\l}kowski operator of a CPTP map. The above statement can be verified by direct inspection, since the Schur-product of positive semi-definite matrices is also positive semi-definite, and the partial trace over the first subsystem yields 
\begin{equation}
\trr{1}{J(\E) \circ C} = \trr{1}{J(\E)} \circ C_{11} = \iden/d.
\end{equation}

Now, we will assume that a Schur-product supermap $\Xi_C$ is a quantum superchannel and we will show that the matrix $C$ defining it has the form from Eq.~\eqref{eq:corr_form}. First, we employ Eq.~\eqref{eq:super_jamiolkowski} to write the Jamio{\l}kowski operator of $\Xi_C$ as
	\begin{align}\label{eqn:super=jamiolkowski}
		{\bf J}_{\Xi_C} &=\frac{1}{d} \sum_{ijkl} \ketbra{ik}{jl}\otimes 
		J(\mathcal{E}_{(ij), (kl)})\circ C \nonumber \\
		&=
		\frac{1}{d^2}\sum_{ijkl} C_{ik,jl} \ketbra{ik}{jl} \otimes 
		 \ketbra{ik}{jl}.
	\end{align}
Note that from our assumption and results of Ref.~\cite{chiribella2008transforming,gour2019comparison}, the matrix ${\bf J}_{\Xi_C} $ is positive semi-definite and therefore $C$ must be positive semi-definite as well.

Next, we will use the TP preserving property of superchannels to show that $C$ is not only positive semi-definite, but also has the desired form from Eq.~\eqref{eq:corr_form}. The TP condition for the output channel is equivalent to 
\begin{align}
		\frac{\delta_{kl}}{d} &=\sum_i \bra{ik} J(\E)\circ C\ket{il}
		=\sum_{i} 	C_{ik,il} J(\E)_{ik,il}.
\end{align}
As the above equality must hold for all channels $\E$, we must have that $C_{ik,il}$ does not depend on index $i$, and that the diagonal elements $C_{ik,ik}$ must be equal to one. This can be proven more explicitly by contradiction. Assume that for some $j,k$ we have $C_{jk,jk} \neq 1$, and consider a quantum channel $\E^{(0)}$ with Jamio{\l}kowski operator
\begin{equation}
    dJ(\E^{(0)}) = \ketbra{j}{j} \otimes \iden.
\end{equation}
We then see that
\begin{align}
		\sum_i \bra{ik} J(\E^{(0)})\circ C\ket{ik}
		= C_{jk,jk} J(\E^{(0)})_{jk,jk} \neq \frac{1}{d},
\end{align}
which contradicts our assumption of the TP preserving property. Similarly, the equality of the off-diagonal elements $C_{ik,il}$ for all $i$ can also be proved by contradiction. Assume that for some $i_0, i_1$ and $k \neq l$, we have $C_{i_0 k,i_0 l}  \neq C_{i_1 k,i_1 l}$. Now, consider a channel $\E^{(1)}$ with the Jamio{\l}kowski matrix given by
	\begin{align}
	    dJ(\E^{(1)})  =& \iden +(\ketbra{i_0}{i_0}-\ketbra{i_1}{i_1})\otimes(\ketbra{k}{l}+\ketbra{l}{k}).
	\end{align}
	To verify that the above matrix is a Jamio\l{}kowski matrix of a CPTP map is straightforward. We then have
	\begin{align}
	\sum_i \bra{ik} J(\E^{(1)}) \circ C\ket{il}	&=  \frac{C_{i_0 k, i_0 l} - C_{i_1 k, i_1 l}}{d} \neq 0,
	\end{align}
	and so $\Xi_C$ does not preserve the trace-preserving property, meaning that the assumption was wrong. We conclude that $\Xi_C$ is a superchannel if and only if the matrix $C$ has the form displayed in  Eq.~\eqref{eq:corr_form}. 
	
    Finally, we turn to proving that preserving the diagonal of the Jamio\l{}kowski state is equivalent to being a Schur-product superchannel. In order to keep the diagonal elements of a matrix $J(\E)$ unchanged by $\Xi$, the diagonal blocks of ${\bf J}_{\Xi}$ should be in the following form
	\begin{equation}\label{eqn:super-blocks}
		 {\bf J_\Xi}^{(i k)} := (\bra{i k } \otimes \iden) \  {\bf J}_{\Xi} \ (\ket{i k } \otimes \iden)  = \frac{1}{d^2}\ketbra{i k }{i k }.
	\end{equation}	 
	To see this, we will use Eq.~\eqref{eqn:super-action} and note
	\begin{align}
	\bra{i k } J(\Xi[\E]) \ket{ i k } &= d^2
	    \bra{i k }
	    \trr{2}{ \ {\bf J}_{\Xi} \  (\iden \otimes J(\E)^\top)}
	    \ket{i k } \nonumber\\
	    &= d^2\tr{ {\bf J_\Xi}^{(i k )}  J(\E)^\top}.
	\end{align}
	Our assumption was that diagonal elements of the Jamio\l{}kowski matrix must remain unchanged under the action of a superchannel $\Xi$, which is equivalent to the fact that for all channels $\E$ we have
	\begin{equation}
	   d^2 \tr{ {\bf J_\Xi}^{( i k)}  J(\E)^\top} = \bra{i k }J(\E) \ket{i k }.
	\end{equation}
	This gives us that the blocks ${\bf J_\Xi}^{( i k )}$ must be in the form presented in Eq.~\eqref{eqn:super-blocks}.
	
    So far we have proven that the condition for invariant diagonal elements of the Jamio\l{}kowski matrix of a channel under the action of superchannel $\Xi$ gives us the full diagonal of the Jamio\l{}kowski matrix of the superchannel, i.e.,
	\begin{equation}
	  \bra{i k i' k'}{\bf J_{\Xi}} \ket{i k i' k'} = \frac{\delta_{i i'} \delta_{k k'}}{d^2}.
	\end{equation}
	Now, we will use the fact, that the Jamio\l{}kowski matrix of a superchannel must be positive semi-definite~\cite{gour2019comparison}. For a non-negative matrix $A$ one has
	\begin{equation}
	    |A_{ij}|^2 \leq A_{ii} A_{jj}.
	\end{equation}
	Since in our case we have a lot of zeros on the diagonal, the elements of the Jamio\l{}kowski matrix of a superchannel $\Xi$ must satisfy the following inequalities 
	\begin{equation}
	 |\!\bra{i k i' k'}{\bf J_{\Xi}} \ket{j l j' l'}\!|^2
	 \leq  \frac{\delta_{i i'} \delta_{k k'} \delta_{j j'} \delta_{l l'}}{d^2}.
	\end{equation}
	The above implies that the Jamio\l{}kowski matrix of a superchannel $\Xi$ can be written as a sum defined in Eq.~\eqref{eqn:super=jamiolkowski}.
	
\end{proof}


\subsection{Physical realisation}

The following proposition specifies the physical realisation of every Schur-product superchannel.

\begin{prop}[Physical realisation of $\Xi_C$]
	\label{prop:schur_super_rep}
	Every Schur-product superchannel can be written as a unitary pre- and post-processing with an ancillary system of dimension $d^2$ as follows:
	\begin{equation}
		\label{eq:diagram}
		\begin{quantikz}
			& \gate{\Xi_C[\E]} &\qw
		\end{quantikz}
		=
		\begin{quantikz}
			&[-0.1cm]	\gate[wires=2]{\U}&[-0.1cm] \gate{\E} &[-0.1cm] \gate[wires=2]{\V}&[-0.6cm]\qw \\[-0.5cm]
			\ketbra{0}{0}	&  	 & \qw 	&	& [-0.6cm]	\trash{\text{\emph{discard}}}
		\end{quantikz}\!\!\!,
	\end{equation}
	where
	\begin{subequations}
	\begin{align}
		\U(\cdot)&=U(\cdot)U^\dagger,\quad U=\sum_{i=1}^d \ketbra{i}{i}\otimes U_i, \\
		\V(\cdot)&=V(\cdot)V^\dagger,\quad V=\sum_{i=1}^d \ketbra{i}{i}\otimes V_i,
	\end{align}
	\end{subequations}
	with $\{U_i\},\{V_i\}$ being arbitrary unitaries of size $d^2$. The relation between $C$ and these unitaries is given by
	\begin{equation}
		\label{eq:C_superchanel}
		C_{ik,jl} = \bra{0} U_l^\dagger V_j^\dagger V_i U_k \ket{0}.
	\end{equation}
\end{prop}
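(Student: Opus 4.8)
The plan is to prove the two implications separately, following the pattern of the earlier Lemma on the physical realisation of $\D_C$. First I would establish soundness: that any circuit of the form in Eq.~\eqref{eq:diagram} implements a Schur-product supermap with the claimed $C$. To do this I would propagate a general input $\rho$ through the circuit --- attach the ancilla in $\ketbra{0}{0}$, conjugate by $U=\sum_i\ketbra{i}{i}\otimes U_i$, apply $\E$ to the system register only, conjugate by $V=\sum_i\ketbra{i}{i}\otimes V_i$, and trace out the $d^2$-dimensional ancilla. Because $U$ and $V$ are controlled on the distinguished basis, the two conjugations act block-diagonally and pin the system indices, so after the partial trace the ancillary degrees of freedom collapse to a single scalar $\bra{0}U_l^\dagger V_j^\dagger V_i U_k\ket{0}$ multiplying the appropriate matrix element of $\E(\ketbra{k}{l})$. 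Rewriting the outcome at the level of the Jamio\l{}kowski operator, I expect to read off $J(\Xi_C[\E])=J(\E)\circ C$ directly, with $C_{ik,jl}=\bra{0}U_l^\dagger V_j^\dagger V_i U_k\ket{0}$. Since the circuit is a physical process it is automatically a superchannel, so Proposition~\ref{prop:schur_super} guarantees for free that this $C$ is a correlation matrix of the block form in Eq.~\eqref{eq:corr_form}.

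The substantive direction is completeness: given an arbitrary Schur-product superchannel $\Xi_C$, I must reconstruct unitaries $\{U_k\},\{V_i\}$ of size $d^2$ realising its $C$. Here I would work entirely with Gram vectors. As $C$ is positive semi-definite of size $d^2$, it admits a Gram decomposition $C_{ik,jl}=\braket{\psi_{jl}}{\psi_{ik}}$ by vectors $\{\ket{\psi_{ik}}\}\subset\mathbb{C}^{d^2}$, the rank bound ensuring that $d^2$ dimensions suffice. The crucial step is to exploit the special block structure of $C$ from Eq.~\eqref{eq:corr_form}: because every diagonal block equals $C_{11}$, the within-block overlaps obey $\braket{\psi_{il}}{\psi_{ik}}=(C_{11})_{kl}$ independently of the block index $i$. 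Hence, for each $i$, the $d$-tuple $\{\ket{\psi_{ik}}\}_k$ has the same Gram matrix $C_{11}$ as the reference tuple $\{\ket{\psi_{1k}}\}_k$.

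I would then invoke the standard fact that two finite tuples of vectors sharing a Gram matrix are related by a unitary: for each $i$ there is a unitary $V_i$ (of size $d^2$, with $V_1=\iden$) such that $\ket{\psi_{ik}}=V_i\ket{\psi_{1k}}$ for all $k$. Writing $\ket{\phi_k}:=\ket{\psi_{1k}}$ --- unit vectors because $C_{ik,ik}=(C_{11})_{kk}=1$ --- I can pick unitaries $U_k$ of size $d^2$ with $U_k\ket{0}=\ket{\phi_k}$. Then $V_iU_k\ket{0}=\ket{\psi_{ik}}$, so $\bra{0}U_l^\dagger V_j^\dagger V_i U_k\ket{0}=\braket{\psi_{jl}}{\psi_{ik}}=C_{ik,jl}$; combined with the forward computation, the circuit built from these unitaries realises exactly $\Xi_C$.

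I expect the completeness direction to be the main obstacle, and within it the point of making the two-sided product form $V_iU_k\ket{0}$ reproduce a general admissible $C$. A bare Gram decomposition of $C$ yields vectors with no product structure, and it is not a priori clear that the dependence on the block (``row'') index $i$, carried by $V_i$, can be separated from the dependence on the within-block (``column'') index $k$, carried by $U_k$, through a single shared seed $\ket{0}$. The resolution rests entirely on the block form of Eq.~\eqref{eq:corr_form}: the equality of all diagonal blocks is precisely the condition ensuring that the per-block tuples are unitarily interchangeable, which is what allows one family $\{U_k\}$ to feed every $V_i$. For the channel case the analogous step needed only a single Gram factorisation of $C$, whereas here the shared-seed constraint forces genuine use of the superchannel block condition derived in Proposition~\ref{prop:schur_super}.
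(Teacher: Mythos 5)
Your proposal is correct and follows essentially the same route as the paper's own proof: the forward direction by propagating $\ketbra{k}{l}\otimes\ketbra{0}{0}$ through the controlled unitaries to read off $J(\Xi_C[\E])=J(\E)\circ C$ with $C_{ik,jl}=\bra{0}U_l^\dagger V_j^\dagger V_i U_k\ket{0}$, and the completeness direction via a Gram decomposition of $C$, using the equality of diagonal blocks from Proposition~\ref{prop:schur_super} to conclude that the per-block vector tuples share a Gram matrix and are hence related by unitaries $V_i$, with the reference tuple encoded as $U_k\ket{0}$. The only (immaterial) difference is bookkeeping in the soundness step: you deduce the block form of $C$ from physicality plus Proposition~\ref{prop:schur_super}, whereas the paper verifies it by direct inspection and then invokes the proposition in the other direction.
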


\begin{proof}
	First, we assume the $\Xi_C[\E]$ has the form from Eq.~\eqref{eq:diagram} and calculate
	\begin{align}
	\!\!\!\Xi_C[\E](\ketbra{k}{l})&\!=\! \trr{2}{ \V\circ(\E \otimes \I)\circ\U [\ketbra{k}{l} \otimes \ketbra{0}{0}]}
	\nonumber\\
	&\!=\!
	\sum_{i j m n} \ketbra{i}{i} \E 
	(\ket{m}\!\braket{m}{k}\braket{l}{n}\!\bra{n} )
	\ketbra{j}{j}
	\nonumber\\
	&\qquad\qquad\times\tr {V_i U_m \ketbra{0}{0} U_{n}^\dagger V_j^\dagger}
	\nonumber\\
	&\!=\!
	\sum_{ij} \bra{i} \E 
	(\ketbra{k}{l})
	\ket{j}
	\bra{0} U_l^\dagger V_j^\dagger V_i U_k \ket{0}\ketbra{i}{j}
	\nonumber\\
	&\!=\!
	d\sum_{ij} J(\E)_{ik,jl} C_{ik,jl} \ketbra{i}{j}.\!\!
	\end{align}
	Since
	\begin{equation}
		J(\Xi_C[\E]) = \frac{1}{d}\sum_{k l}
		\Xi_C[\E](\ketbra{k}{l}) \otimes \ketbra{k}{l},
	\end{equation}
	we get
	\begin{equation}
		J(\Xi_C[\E]) = 	\sum_{ijkl} J_{ik,jl}(\E) C_{ik,jl} \ketbra{ik}{jl} =  J(\E)\circ C.
	\end{equation}
	Moreover, by direct inspection one can see that $C$ defined by Eq.~\eqref{eq:C_superchanel} is a correlation matrix with a block structure as in Eq.~\eqref{eq:corr_form}. Thus, by Proposition~\ref{prop:schur_super},
	the supermap $\Xi_C$ is a Schur-product superchannel.
	
	Next, we assume that $\Xi_C$ is a Schur-product superchannel. From Proposition~\ref{prop:schur_super}, we know that the matrix $C$ has equal diagonal sub-matrices, i.e.,
	\begin{equation}
		C_{ik,il} =  C_{1k,1l}.
	\end{equation}
	On the other hand, since $C$ is a correlation matrix it can be written as a Gram matrix:
	\begin{equation}
		C_{ik,jl} = \bra{\xi_{jl}} {\xi_{ik}}\rangle.
	\end{equation}
	We thus have
	\begin{equation}
	\bra{\xi_{il}} {\xi_{ik}}\rangle = \bra{\xi_{1l}} {\xi_{1k}}\rangle.
	\end{equation}
	Now, it is well known~\cite{jozsa2000distinguishability} that if two collections of vectors have the same Gram matrix, then the sets are related by a unitary transformation. Therefore,
	\begin{equation}
		\ket{\xi_{ik}} = V_{i} \ket{\xi_{1k}}
	\end{equation}
	for some unitary matrices $\{V_i\}$.  If we denote
	\begin{equation}
		\ket{\xi_{1 k}} = U_{k} \ket{0},
	\end{equation}
	with some unitary matrices $\{U_k\}$, then we obtain
	\begin{equation}
	C_{ik,jl}
	= 
	\bra{\xi_{1 l}} V_{j}^\dagger
	V_{i} \ket{\xi_{1 k}}
	=
	\bra{0} U^\dagger_{l}
	V_{j}^\dagger
	V_{i}
	U_{k}
	\ket{0}.
	\end{equation}
	Therefore, $\Xi_C$ can be written in the form from Eq.~\eqref{eq:diagram}.
\end{proof}


\subsection{Comparison with dephasing pre- and post-processing}

By comparing Eq.~\eqref{eq:diagram} with Eq.~\eqref{eq:diagram_channel} we see that a dephasing superchannel acts as a generalisation of pre- and post-processing with dephasing channels that employs memory. More precisely, a dephasing channel simply correlates the system with an environment according to Eq.~\eqref{eq:diagram_channel}, and then the environment is discarded. But if instead it is kept intact, re-used again after the action of a channel $\E$ and only then discarded, the system would undergo evolution described by $\Xi_C[\E]$. The crucial question then is: how much more general transformations we can obtain due to these memory effects. In other words, we want to ask how much larger is the space of dephasing superchannels as compared to superchannels formed from pre- and post-processing by dephasing channels.

It is a straightforward calculation to show that pre- and post-processing by dephasing channels $\D_{C^{(1)}}$ and $\D_{C^{(2)}}$ has the following effect on the Jamio{\l}kowski state of a general channel $\E$:
\begin{equation}
    J(\D_{C^{(2)}}\circ\E\circ\D_{C^{(1)}})= J(\E)\circ (C^{(2)}\otimes C^{(1)}).
\end{equation}
Note that the symbol $\circ$ on the left hand side of the above equality denotes concatenation of channels, while on the right hand side it denotes the Schur product. We see that while a general dephasing superchannel is described by a correlation matrix $C$ from Eq.~\eqref{eq:corr_form}, the correlation matrices that we can obtain by pre- and post-processing without employing a memory are of the product form. In order to address our question we thus need to understand 
correlations carried by the bi-partite quantum
state associated with the matrix $C$ defined in Eq.~\eqref{eq:corr_form}.

Let us first note that if this state
is classically correlated, 
so the correlation matrix $C$ can be written in the form
\begin{equation}
    C=\sum_i p_i C^{(2,i)}\otimes C^{(1,i)},
\end{equation}
for some probability distribution $\v{p}$, then dephasing superchannels do not generate much more general transformations than dephasing pre- and post-processing without memory. It is because in this case they only correspond to probabilistic mixtures of various dephasing pre- and post-processings, i.e., they can be simulated by a classical coin toss followed by a dephasing pre- and post-processing dependent on the result
of the coin toss. Thus, dephasing superchannels can induce truly more general transformations only when the correlation matrix $C$ corresponds to an entangled state.

Interestingly, in the simplest case of a qubit system this is not the case and the correlation matrix $C$ can only be classically correlated. To see this, note that a general correlation matrix $C$ of the form from Eq.~\eqref{eq:corr_form} is given by
\begin{equation}
C=\begin{pmatrix}  
C_0&C_1\\
C_1^\dagger&C_0
\end{pmatrix}
\end{equation}
where $C_0$ and $C_1$ are $2\times 2$ matrices. Now, a unitary $\Pi$ given in the same block form by
\begin{equation}
    \Pi=\begin{pmatrix}
    0&\iden\\
    \iden&0
    \end{pmatrix}
\end{equation}
transforms $C$ into
\begin{equation}
C':=\Pi C\Pi^\dagger=\begin{pmatrix}  
C_0&C_1^\dagger\\
C_1&C_0
\end{pmatrix}.
\end{equation}
At the same time, the partial transpose of $C$ is given by
\begin{equation}
    C^{\top_2}=C'^{*},
\end{equation}
and so its spectrum is the complex conjugate of the spectrum of $C'$, which in turn is the same as the spectrum of $C$. Thus, $C$ has a positive partial transpose and by the Peres-Horodecki criterion~\cite{peres1996separability,horodecki2001separability} we know that $C$ is not entangled. We can thus conclude that every dephasing superchannel for qubit systems can be realised by a probabilistic mixture of dephasing pre- and post-processings.

However, already in dimension $d=3$, dephasing superchannels provide more general transformations. A simple example is given by a superchannel with the corresponding correlation matrix given by
\begin{equation}
C=\begin{pmatrix}
    \iden & A & B\\
    A^\dagger & \iden & 0\\
    B^\dagger & 0 & \iden
\end{pmatrix}
\end{equation}
where $\iden$ and $0$ denote identity and zero matrices of size 3, and
\begin{equation}
    A=\begin{pmatrix}
      0 & 0 & 0\\
      0 & 0 & 0\\
      1 & 0 & 0
    \end{pmatrix},\qquad
    B=\begin{pmatrix}
      1 & 0 & 0\\
      0 & 0 & 0\\
      0 & 0 & 0
    \end{pmatrix}.
\end{equation}
The above matrix $C$ is of the form \eqref{eq:corr_form} and it can be easily verified that its partial transpose has a negative eigenvalue $1-\sqrt{2}$. Thus, by Peres-Horodecki criterion, $C$ corresponds to an entangled state and so $\Xi_C$ cannot be obtained as a classical mixture of dephasing pre- and post-processing.


\subsection{Action on Schur-product channels}

The action of a Schur-product superchannel $\Xi_{C}$ on a Schur-product channel $\D_{C'}$ is particularly simple and intuitive. We have
\begin{align}
	\Xi_{C}[\D_{C'}](\rho) & =  \sum_{ijkl}\rho_{kl}\ketbra{i}{i} \Xi_{C}[\D_{C'}](\ketbra{k}{l})\ketbra{j}{j}
	\nonumber\\
	&=d\sum_{ijkl}\rho_{kl} J(\Xi_{C}[\D_{C'}])_{ik,jl} \ketbra{i}{j}
	\nonumber\\
	&=d\sum_{ijkl}\rho_{kl} (J(\D_{C'})\circ C)_{ik,jl} \ketbra{i}{j}.
\end{align}
Now, since
\begin{equation}
	J(\D_{C'})=\frac{1}{d}\sum_{ij} C'_{ij} \ketbra{ii}{jj}, \label{Jdi}
\end{equation}
we get
\begin{align}
	J(\D_{C'})\circ C&=\frac{1}{d} \sum_{ij} C'_{ij} C_{ii,jj}\ketbra{ii}{jj}
	\nonumber\\
	&=\frac{1}{d}\sum_{ij} (C'\circ \tilde{C})_{ij}\ketbra{ii}{jj},\label{Jdi2}
\end{align}
where
\begin{equation}
	\label{eq:C_tilde}
	\tilde{C}_{ij}:=C_{ii,jj}
\end{equation}
is a correlation matrix due to Eq.~\eqref{eq:C_superchanel}. We thus arrive at
\begin{align}
	\!\!\Xi_{C}[\D_{C'}](\rho) &=\sum_{ij}\rho_{ij} (C'\circ\tilde{C})_{ij} \ketbra{i}{j}=\rho\circ(C'\circ \tilde{C})\nonumber\!\\
	\!\!&=\D_{C'\circ\tilde{C}}(\rho)=\D_{\tilde{C}}(\D_{C'}(\rho)).\!
\end{align}

Therefore, the action of $\Xi_{C}$ on a Schur-product channel $\mathcal{D}_{C'}$ is equivalent to post- (or pre-, since the considered channels commute) processing by another Schur-product channel~$\D_{\tilde{C}}$. In this particular case no memory is needed and the dephasing superchannel acts simply as a dephasing channel. Thus, $\Xi_{C}$ maps a dephasing channel $\mathcal{D}_{C'}$ to a more dephasing channel, i.e., the damping of coherence between states $i$ and $j$ originally described by $|C'_{ij}|$  becomes $|C'_{ij}\tilde{C}_{ij}|\leq|C'_{ij}|$. Moreover, all quantities that satisfy the data-processing inequality \cite{nielsen2010quantum}
are monotones, e.g., the capacity of a dephasing channel cannot increase under dephasing superchannel. 


\section{Dephasing superchannels and coherence of channels}
\label{sec:coherence}

In this section we will discuss the interplay between dephasing superchannels and coherence resources of quantum channels they act upon. We will first present a short proof that the ability of a channel to create coherence deteriorates under the action of a dephasing superchannel. Then, we will explain that the power of dephasing superchannels to affect a quantum channel $\E$ (measured by the size of the orbit of channels that $\E$ can be sent to by dephasing superchannels) is bounded by the coherence content of $\E$. Finally, we will explain how coherence of a quantum channel $\E$ can be used as a resource to distinguish between various dephasing superchannels.


\subsection{Monotonicity of coherence generating power}

In Sec.~\ref{sec:schur_channels} we have mentioned that Schur-product channels cannot increase any meaningful measure of state's coherence $\C$~\cite{aberg2006superposition,baumgratz2014quantifying}, such as the $l_1$-norm of coherence or relative entropy of coherence. Here, we will prove an equivalent result for Schur-product superchannels. Namely, we will show that they cannot increase the cohering power $\C_g$~\cite{mani2015cohering}, which is a measure quantifying the ability of a quantum channel $\E$ to create coherence:
\begin{equation}
	\C_g(\E):=\max_k \{\C(\E(\ketbra{k}{k}))\},
\end{equation}
where $\C$ is any coherence measure satisfying the basic axioms~\cite{baumgratz2014quantifying}.

In order to achieve this, we will look at the action of a processed channel $\Xi_C[\E]$ on the distinguished diagonal states $\ketbra{k}{k}$. Employing the representation from Proposition~\ref{prop:schur_super_rep}, we have
\begin{align}
	\Xi_C[\E](\ketbra{k}{k})&=\sum_{ij} \matrixel{i}{\E(\ketbra{k}{k})}{j}\bra{0}U_k^\dagger V_j^\dagger V_i U_k\ket{0} \ketbra{i}{j}
	\nonumber\\
	&=\sum_{ij} \matrixel{i}{\E(\ketbra{k}{k})}{j} \tilde{C}_{ij} \ketbra{i}{j}
	\nonumber\\
	&=\E(\ketbra{k}{k})\circ \tilde{C}=\D_{\tilde{C}}(\E(\ketbra{k}{k})),
\end{align}
where we have introduced a correlation matrix $\tilde{C}$.

It is now straightforward to show that resource generating power is a monotone under $\Xi_C$:
\begin{align}
	\C_g(\Xi_C[\E]):=&\max_k \{\C(\Xi_C[\E](\ketbra{k}{k}))\}
	\nonumber\\
	=&\max_k \{\C(\D_{\tilde{C}}(\E(\ketbra{k}{k})))\}
	\nonumber\\
	\leq&\max_k \{\C(\E(\ketbra{k}{k}))\}=\C_g(\E),
\end{align}
where the inequality comes from the fact that Schur-product channels are incoherent operations and thus cannot increase any meaningful measure of coherence.


\subsection{Power of dephasing superchannels}

We now proceed to investigating how strongly can a quantum gate, described by a channel $\E$, be affected by a dephasing noise described by a family of dephasing superchannels. Intuitively, one expects that dephasings can affect a more coherent gate more strongly. For example, in the extreme case of a classical channel,
\begin{equation}
    \E_T(\cdot):=\sum_{ij} T_{ij} \matrixel{j}{(\cdot)}{j} \ketbra{i}{i}\label{classicalC}
\end{equation}
with $T$ being a stochastic matrix, one has \mbox{$\Xi_C[\E]=\E$} independently of $C$. Thus, classical channels are unaffacted by a dephasing noise. On the other hand, arguably the most coherent qubit channel given by
\begin{equation}
    \E(\cdot)=H(\cdot)H^\dagger,\qquad H=\frac{1}{\sqrt{2}}\begin{pmatrix}1&1\\1&-1\end{pmatrix}
\end{equation}
can be sent to a perfectly distinguishable channel $\E'$ by $\Xi_C$ given by
\begin{equation}
    C=\begin{pmatrix}
        \phantom{-}I&-I\phantom{-}\\
        -I&\phantom{-}I\phantom{-}
    \end{pmatrix},\qquad
    I:=\begin{pmatrix}
        1&1\\
        1&1
    \end{pmatrix}.
\end{equation}
Here, perfectly distinguishable means that there exists an input state $\ketbra{0}{0}$ that is mapped by $\E$ and $\E'$ to two orthogonal quantum states,
$\ketbra{+}{+}$ and $\ketbra{-}{-}$, which can be distinguished in an experiment with probability equal to 1.

One way to quantify the effect that a dephasing noise can have on a channel $\E$
is to find ``how far'' a quantum channel $\E$ can be send via a dephasing superchannel. In other words, we wish to
evaluate the supremeum,
\begin{equation}
    \sup_C D(\E, \Xi_C[\E]),
\end{equation}
with $D$ being some distance measure on the set of channels, e.g., the diamond norm distance. Here, we will investigate a more coarse-grained notion: we will look for the maximal number of distinguishable channels that can be obtained from a given channel $\E$ via dephasing superchannels. In other words, we want to find the size of the error space, i.e., the size of the image of $\E$ under the action of all possible dephasing superchannels. More formally, our aim is to upper-bound the number $M(\E,\epsilon)$, which is the maximal number of channels 
\begin{equation}
    \E_m:=\Xi_{C_m}[\E]
\end{equation} 
that can be obtained from a given channel $\E$ via dephasing superchannels $\Xi_{C_m}$ and that are distinguishable with average probability $1-\epsilon$. Recall that a general scheme for distinguishing between $M$ channels acting on $d_A$-dimensional states is composed of an input state $\rho^{AB}$ on an extended space $\H_{d_A}\otimes \H_{d_B}$, together with a decoding measurement described by POVM elements $\{E^{AB}_m\}_{m=1}^M$. The average probability of distinguishing between the channels is then given by
\begin{equation}
    1-\epsilon=\frac{1}{M}\sum_{m=1}^{M}\textrm{Tr}\left( E^{AB}_{m}(\E^A_m\otimes\I^B)(\rho^{AB})\right).
\end{equation}
Optimising the above over all input states $\rho^{AB}$ and measurements  $\{E_m^{AB}\}_{m=1}^M$ yields the optimal distinguishing probability. Note that to achieve optimal probability, it is enough to choose $d_B=d_A$; however, for technical reasons, we choose it to be $d_B=Md_A$.

We start by introducing the notation for a completely dephasing channel
\begin{equation}
    \Delta(\cdot)=\sum_i \matrixel{i}{(\cdot)}{i} \ketbra{i}{i}
\end{equation}
and for classical (completely dephased) version of a channel $\E$:
\begin{equation}
    \E_\Delta:=\Delta\circ\E\circ\Delta.
\end{equation}
We note that all dephasing superchannels $\Xi_C$ satisfy
\begin{subequations}
\begin{align}
    \Xi_{C}\left(\E_\Delta\right)&=\E_\Delta\label{eq:cond1},\\
    \Delta\circ\Xi_{C}\left(\E\right)\circ\Delta&=\E_\Delta,\label{eq:cond2}
\end{align}
\end{subequations}
for all channels $\E$. We also introduce the following two classical-quantum states:
\begin{subequations}
\begin{align}
    \!\!\!\tau^{RAB}&:=\frac{1}{M} \!\sum_{m=1}^M \ketbra{m}{m}^R\otimes (\E^A_m\otimes \I^B)(\rho^{AB}),\\
    \!\!\!\zeta^{RAB}&:=\frac{1}{M} \!\sum_{m=1}^M \ketbra{m}{m}^R\otimes (\E^A_\Delta\otimes \I^B)(\rho^{AB}),\!
\end{align}
\end{subequations}
for some input state $\rho^{AB}$, and recall the notion of hypothesis testing relative entropy~\cite{wang10,buscemi2010quantum,brandao2011one}:
\begin{align}
\!\!\!D_H^{\epsilon}(\rho\|\sigma) := - \log \inf \big\{ \tr{Q\sigma}\ \big|\ \!&0 \leq Q \leq \iden,\nonumber\\
\!&\tr{Q\rho} \geq 1-\epsilon \big\} \,.
\label{eq:hypothesis_rel_ent}
\end{align}

Now, let us assume that there exists a choice of superchannels $\{\Xi_{C_m}\}_{m=1}^M$ such that $M$ resulting channels $\E_m=\Xi_{C_m}(\E)$ are distinguishable with average probability larger than \mbox{$1-\epsilon$}. This means that there exist an input state $\rho^{AB}$ and a POVM measurement $\{E^{AB}_m\}_{m=1}^M$ such that
\begin{equation}
    \frac{1}{M}\sum_{m=1}^{M}\textrm{Tr}\left( E^{AB}_{m}(\E^A_m\otimes\I^B)(\rho^{AB})\right)\geq 1-\epsilon.
\end{equation}
We can thus introduce an operator $Q$:
\begin{equation}
    Q:=\sum_{m=1}^M \ketbra{m}{m}^R\otimes E_m^{AB},
\end{equation}
which satisfies $0\leq Q\leq \iden$ and
\begin{equation}
    \tr{Q\tau^{RAB}}\geq 1-\epsilon.
\end{equation}
As a result we have the following bound:
\begin{equation}
    D^\epsilon_H(\tau^{RAB}\|\zeta^{RAB})\geq -\log (Q\zeta^{RAB}) = \log M.
\end{equation}
To get a state-independent bound we optimise over all input states $\rho^{AB}$ to arrive at
\begin{equation}
    \log [M(\E,\epsilon)] \leq \sup_{\rho^{AB}} D^\epsilon_H(\tau^{RAB}\|\zeta^{RAB}).
\end{equation}

The next step is to bring $\tau^{RAB}$ and $\zeta^{RAB}$ to a more useful form. For that we need to introduce an auxiliary state 
\begin{equation}
 \sigma^{RAB}:=\frac{1}{M}\sum_{m=1}^{M}\ketbra{m}{m}^R\otimes\rho^{AB}
\end{equation}
and a superoporator
\begin{equation}
    \M_m(\cdot)=\ketbra{m}{m}(\cdot)\ketbra{m}{m}.
\end{equation}
With a slight abuse of notation we will also use $\M_m$ to denote a supermap that is acting as a post-processing via $\M_m$, i.e., $\M_m[\E]:=\M_m\circ \E$. We now have the following
\begin{align}
    \tau^{RAB}&= \sum_{m=1}^M (\M_m^R\otimes \E_m^A\otimes \I^B)(\sigma^{RAB})\nonumber\\
    &=\left(\sum_{m=1}^M \M^R_m\otimes\Xi^A_m\otimes\I^B\right)\nonumber\\
    &\qquad\qquad[\I^R\otimes\E^A\otimes\I^B](\sigma^{RAB}),
\end{align}
where the first parentheses contains a superchannel that acts on the channel in the second parentheses. Similarly, we also have
\begin{align}
    \zeta^{RAB}&=\sum_{m=1}^M (\M_m^R\otimes \E_\Delta^A\otimes \I^B)(\sigma^{RAB})\nonumber\\
    &=\sum_{m=1}^M (\M_m^R\otimes \Xi_m[\E_\Delta^A]\otimes \I^B)(\sigma^{RAB})\nonumber\\
    &=\left(\sum_{m=1}^M \M^R_m\otimes\Xi^A_m\otimes\I^B\right)\nonumber\\
    &\qquad\qquad[\I^R\otimes\E_\Delta^A\otimes\I^B](\sigma^{RAB}),
\end{align}

With the following short-hand notation:
\begin{subequations}
\begin{align}
    \Theta^{RA}&:=\sum_{m=1}^M \M^R_m\otimes\Xi^A_m,\\
    \E^{RA}&:=\I^R\otimes \E^A,\\
    \E_\Delta^{RA}&:=\I^R\otimes \E_\Delta^A,
\end{align}
\end{subequations}
we then have 
\begin{align}
    \!\!\!\log[M(\E,\epsilon)] &\leq \sup_{\rho^{AB}} D_H^\epsilon \left(\Theta^{RA}[\E^{RA}]\otimes \I^B (\sigma^{RAB})\right.\nonumber\\
    &\qquad\qquad\qquad\left. \|\Theta^{RA}[\E_\Delta^{RA}]\otimes \I^B (\sigma^{RAB})\right)\nonumber\\
    &\leq \sup_{\sigma^{RAB}} D_H^\epsilon \left(\Theta^{RA}[\E^{RA}]\otimes \I^B (\sigma^{RAB})\right.\nonumber\\
    &\qquad\qquad\qquad\left. \|\Theta^{RA}[\E_\Delta^{RA}]\otimes \I^B (\sigma^{RAB})\right)\nonumber\\
   & =: \mathcal{C}_{D_H^\epsilon}(\Theta^{RA}[\E^{RA}]\|\Theta^{RA}[\E_\Delta^{RA}]),
\end{align}
where $\mathcal{C}_{D}$ is the channel divergence introduced in Ref.~\cite{gour2019comparison} for any state divergence measure $D$:
\begin{align}
\mathcal{C}(\E_1\|\E_2)&=\sup_{\rho^{AB}} D((\E^A_{1}\otimes\mathcal{I}^B)(\rho^{AB})\nonumber\\
&\qquad\qquad\qquad \|(\E^A_{2}\otimes\mathcal{I}^B)(\rho^{AB})). \label{eq:classpross}
\end{align}
Importantly, channel divergences satisfy data-processing inequality, and so
\begin{align}
    \log[M(\E,\epsilon)] &\leq \mathcal{C}_{D_H^\epsilon}(\E^{RA}\|\E_\Delta^{RA})=\mathcal{C}_{D_H^\epsilon}(\E\|\E_\Delta).
\end{align}

We thus conclude that the effect that dephasing noises can have on a given quantum gate $\E$, as quantified by $M(\E,\epsilon)$, is upper-bounded by channel coherence measure related to hypothesis testing relative entropy:
\begin{equation}
    M(\E,\epsilon) \leq 2^{\mathcal{C}_{D_H^\epsilon}(\E\|\E_\Delta)}.
\end{equation}


\subsection{Distinguishing dephasing superchannels}

To complement the discussion from the previous section, here we explain how the sensitivity to dephasing noises of a coherent channel can be considered as a resource for distinguishing dephasing superchannels. As already observed, classical channels are invariant under dephasing superchannels, and so they cannot be used to distinguish between any two dephasing noises. On the other hand, a coherent channel is transformed non-trivially, so the resulting channel should carry some information about the parameters of the corresponding dephasing superchannel, and hence should be more helpful for noise metrology. 

Here we will show how coherence of a channel $\mathcal{E}$ quantified by generalised robustness of coherence upper bounds the number of dephasing superchannels that can be distinguished using $\E$. More precisely, given a channel $\E$ and a set of dephasing superchannels $\{\Xi_{C_i}\}$, a general strategy to distinguish between the elements of this set is to apply the processed channel $\Xi_{C_i}[\E]$ to half of a bipartite (possibly entangled) state $\rho^{AB}$ and to perform a measurement on the resulting state. The optimal success probability of distinguishing between $M$ uniformly sampled dephasing superchannels is then given by
\begin{align}
 \!\! &\!\!p_{\rm succ}(\{\Xi_{C_i}\},\E)\nonumber\\
 \!\! &\!\! :=\!\!\!\!\!\!\max_{\rho^{AB},\{E^{AB}_i\}}\! \frac{1}{M}\!\sum_{i=1}^M \tr{E_i^{AB}(\Xi_{C_i}[\E^A]\otimes\I^B) (\rho^{AB})},\!
 \label{eq:p_succ}
\end{align}
where $\rho^{AB}$ is maximised over all bipartite input states and $\{E_i^{AB}\}$ over all joint decoding POVM elements. In what follows our aim will be to upper bound the maximum number $M(\E,\epsilon)$ of dephasing superchannels distinguishable using $\E$ with probability $1-\epsilon$. 

To achieve the above goal we will use the concept of robustness of coherence, originally introduced as a measure of coherence for quantum states in Ref.~\cite{piani2016robustness}, and recently generalised to quantify the coherence of channels in Ref.~\cite{TR}. First, let us slightly abuse the notation and denote by $\mathcal{T}_c$ the set of of classical channels $\E_T$ defined in Eq.~\eqref{classicalC}, i.e., with Jamio{\l}kowski states $J\left(\E_T\right)$ given by
\begin{equation}
    J(\E_T)=\frac{1}{d}\sum_{i,j}T_{ij}\ketbra{i}{i}\otimes\ketbra{j}{j},\label{eq:classicalJ}
\end{equation}
which are incoherent in the distinguished basis. Note that the set $\mathcal{T}_c$ is convex and closed. Now, the generalized robustness of coherence of a channel $\E$ is defined as
\begin{equation}
    R\left(\E\right):=\min_{\F\in \T_q}\left\{ r\geq 0\mid\frac{\E+r \F}{1+r}\in\mathcal{T}_c\right\},
\end{equation}
where the minimum is taken over the set of all quantum channels~$\T_q$.
The generalized robustness $R(\E)$ quantifies the minimum amount of noise a channel $\E$ can withstand before becoming classical.

Next, for a given channel $\E$ let us denote by $\F_*$ the channel achieving the minimum in the definition of robustness, and by the $\tilde{\E}$ the resulting classical channel, i.e.,
\begin{equation}
    \tilde{\E}:=\frac{\E+R(\E)\F_*}{1+R(\E)}\in \T_c.
\end{equation}
Inverting the above to get the expression for $\E$, the following then holds for any bipartite state $\rho$ and any measurement $\{E_i\}$ (to simplify notation we drop the superscripts denoting subsystems):
\begin{align}
    &\sum_{i=1}^M \tr{E_i(\Xi_{C_i}[\E]\otimes\I) (\rho)} \nonumber\\
    &\quad = \sum_{i=1}^M \tr{E_i(\Xi_{C_i}[(1+R(\E))\tilde{\E}-R(\E)\F_*]\otimes\I) (\rho)}\nonumber\\
    &\quad \leq (1+R(\E))\sum_{i=1}^M \tr{E_i(\Xi_{C_i}[\tilde{\E}]\otimes\I) (\rho)}\nonumber\\
    &\quad = (1+R(\E))\sum_{i=1}^M \tr{E_i(\tilde{\E}\otimes\I) (\rho)}\nonumber\\
    &\quad = (1+R(\E)),
\end{align}
where we used the fact that a classical channel $\tilde{\E}$ is invariant under dephasing superchannels $\Xi_{C_i}$. Using the above together with Eq.~\eqref{eq:p_succ} and denoting the success probability by $(1-\epsilon)$, we arrive at the upper bound for the number $M(\E,\epsilon)$:
\begin{equation}
    \label{eq:bound1}
    M(\E,\epsilon) \leq \frac{1+R(\E)}{1-\epsilon}.
\end{equation}
Note that the right hand side of Eq.~\eqref{eq:bound1} is a function of the generalized robustness, and so it determines an efficiently computable upper bound for $M(\mathcal{E},\epsilon)$~\cite{TR}. Moreover, the obtained bound quantitatively demonstrates the intuitive claim that a coherence content of $\mathcal{E}$ is a necessary resource for distinguishing dephasing noises.


\section{Conclusions and outlook}
\label{sec:outlook}

In this work we introduced the most natural class of superchannels that model dephasing noises acting on quantum gates. We provided their mathematical representation and physical realisation analogous to those of dephasing channels, but also proved that they describe a wider class of noises. Furthermore, we applied our characterisation to determine several effects of dephasing noises on quantum channels, such as the decrease in coherence generating power or the maximum possible disturbance. Additionally, we demonstrated that our formalism allows one to exploit the sensitivity of coherent gates to dephasing noises as a resource in the field of noise metrology. 

The results presented here should form a timely contribution to the development of quantum technologies, where the control of noise remains a significant challenge. Moreover, our formalism could be of interest for current research lines on superchannels with memory in time, or parallel correlations~\cite{ Yokojima2021}. The simplicity of the model studied here could be helpful to develop a tractable case study in the above mentioned research, with concrete implementations in diamond nitrogen-vacancy centres~\cite{Chen2018} and efficient quantum error correction codes~\cite{Layden2020}. Among the quantum technologies that we expect to take advantage of the current contribution one could mention quantum heat engines~\cite{dag2019temperature} and the quantum internet~\cite{Wehner18}. As any quantum communication network, the quantum internet requires calibration of the noise between nodes, but on the early stages of development it might lack local memories and access to quantum error correction. Precisely, in the mentioned early stages of the network, our formalism would provide a tool to estimate and model the errors in coherent operations between nodes. 

Another research area that could benefit from the results presented in this work is the theory of channel resources~\cite{Gour2019,Liu2020,Li2020,Carlo2020,Winter2019}. 
Our results imply that dephasing superchannels are good candidates for free operations in resource theories of coherence generating power~\cite{Li2020,Liu2020}. In consequence, they could be employed to compute lower bounds on channel distillation rates in these theories~\cite{Liu2020}. Finally, we would like to emphasize that our work lays the foundation for similar investigations to characterize different classes of gate noises. Some natural analyses of noise along parallel lines should include amplitude-damping and leakage or random unitary errors (especially the uniform depolarisation). The extensions of the latter to the context of quantum gates could bring significant progress in the theory of quantum control of noise.


\subsection*{Acknowledgements} 

We would like to thank Dariusz Chru\'{s}ci\'{n}ski for useful comments on the manuscript. Z.P., K.K. and R.S. acknowledge financial support by the Foundation for Polish Science through TEAM-NET project (contract no. POIR.04.04.00-00-17C1/18-00). P.H. acknowledges support by the Foundation for Polish Science (IRAP project, ICTQT, contract no. 2018/MAB/5), co-financed by EU within Smart Growth Operational Programme. K.\.Z. is supported by National Science Center in Poland
under the Maestro grant number DEC-2015/18/A/ST2/00274.

\bibliography{Bib_channels}

\begin{thebibliography}{42}%
\makeatletter
\providecommand \@ifxundefined [1]{%
 \@ifx{#1\undefined}
}%
\providecommand \@ifnum [1]{%
 \ifnum #1\expandafter \@firstoftwo
 \else \expandafter \@secondoftwo
 \fi
}%
\providecommand \@ifx [1]{%
 \ifx #1\expandafter \@firstoftwo
 \else \expandafter \@secondoftwo
 \fi
}%
\providecommand \natexlab [1]{#1}%
\providecommand \enquote  [1]{``#1''}%
\providecommand \bibnamefont  [1]{#1}%
\providecommand \bibfnamefont [1]{#1}%
\providecommand \citenamefont [1]{#1}%
\providecommand \href@noop [0]{\@secondoftwo}%
\providecommand \href [0]{\begingroup \@sanitize@url \@href}%
\providecommand \@href[1]{\@@startlink{#1}\@@href}%
\providecommand \@@href[1]{\endgroup#1\@@endlink}%
\providecommand \@sanitize@url [0]{\catcode `\\12\catcode `\$12\catcode
  `\&12\catcode `\#12\catcode `\^12\catcode `\_12\catcode `\%12\relax}%
\providecommand \@@startlink[1]{}%
\providecommand \@@endlink[0]{}%
\providecommand \url  [0]{\begingroup\@sanitize@url \@url }%
\providecommand \@url [1]{\endgroup\@href {#1}{\urlprefix }}%
\providecommand \urlprefix  [0]{URL }%
\providecommand \Eprint [0]{\href }%
\providecommand \doibase [0]{https://doi.org/}%
\providecommand \selectlanguage [0]{\@gobble}%
\providecommand \bibinfo  [0]{\@secondoftwo}%
\providecommand \bibfield  [0]{\@secondoftwo}%
\providecommand \translation [1]{[#1]}%
\providecommand \BibitemOpen [0]{}%
\providecommand \bibitemStop [0]{}%
\providecommand \bibitemNoStop [0]{.\EOS\space}%
\providecommand \EOS [0]{\spacefactor3000\relax}%
\providecommand \BibitemShut  [1]{\csname bibitem#1\endcsname}%
\let\auto@bib@innerbib\@empty
\bibitem [{\citenamefont {MacFarlane}\ \emph {et~al.}(2003)\citenamefont
  {MacFarlane}, \citenamefont {Dowling},\ and\ \citenamefont {Milburn}}]{QT1}%
  \BibitemOpen
  \bibfield  {author} {\bibinfo {author} {\bibfnamefont {A.~G.~J.}\
  \bibnamefont {MacFarlane}}, \bibinfo {author} {\bibfnamefont {J.~P.}\
  \bibnamefont {Dowling}},\ and\ \bibinfo {author} {\bibfnamefont {G.~J.}\
  \bibnamefont {Milburn}},\ }\bibfield  {title} {\bibinfo {title} {Quantum
  technology: the second quantum revolution},\ }\href
  {https://doi.org/10.1098/rsta.2003.1227} {\bibfield  {journal} {\bibinfo
  {journal} {Philos. Trans. R. Soc. A}\ }\textbf {\bibinfo {volume} {361}},\
  \bibinfo {pages} {1655} (\bibinfo {year} {2003})}\BibitemShut {NoStop}%
\bibitem [{\citenamefont {Preskill}(2018)}]{QT2}%
  \BibitemOpen
  \bibfield  {author} {\bibinfo {author} {\bibfnamefont {J.}~\bibnamefont
  {Preskill}},\ }\bibfield  {title} {\bibinfo {title} {Quantum {C}omputing in
  the {NISQ} era and beyond},\ }\href
  {https://doi.org/10.22331/q-2018-08-06-79} {\bibfield  {journal} {\bibinfo
  {journal} {{Quantum}}\ }\textbf {\bibinfo {volume} {2}},\ \bibinfo {pages}
  {79} (\bibinfo {year} {2018})}\BibitemShut {NoStop}%
\bibitem [{\citenamefont {Knill}\ and\ \citenamefont {Laflamme}(1997)}]{QEC0}%
  \BibitemOpen
  \bibfield  {author} {\bibinfo {author} {\bibfnamefont {E.}~\bibnamefont
  {Knill}}\ and\ \bibinfo {author} {\bibfnamefont {R.}~\bibnamefont
  {Laflamme}},\ }\bibfield  {title} {\bibinfo {title} {Theory of quantum
  error-correcting codes},\ }\href {https://doi.org/10.1103/PhysRevA.55.900}
  {\bibfield  {journal} {\bibinfo  {journal} {Phys. Rev. A}\ }\textbf {\bibinfo
  {volume} {55}},\ \bibinfo {pages} {900} (\bibinfo {year} {1997})}\BibitemShut
  {NoStop}%
\bibitem [{\citenamefont {Terhal}(2015)}]{QEC1}%
  \BibitemOpen
  \bibfield  {author} {\bibinfo {author} {\bibfnamefont {B.~M.}\ \bibnamefont
  {Terhal}},\ }\bibfield  {title} {\bibinfo {title} {Quantum error correction
  for quantum memories},\ }\href {https://doi.org/10.1103/RevModPhys.87.307}
  {\bibfield  {journal} {\bibinfo  {journal} {Rev. Mod. Phys.}\ }\textbf
  {\bibinfo {volume} {87}},\ \bibinfo {pages} {307} (\bibinfo {year}
  {2015})}\BibitemShut {NoStop}%
\bibitem [{\citenamefont {Gaitan}(2008)}]{QEC2}%
  \BibitemOpen
  \bibfield  {author} {\bibinfo {author} {\bibfnamefont {F.}~\bibnamefont
  {Gaitan}},\ }\href@noop {} {\bibinfo {title} {Quantum error correction and
  fault tolerant quantum computing}} (\bibinfo {year} {2008})\BibitemShut
  {NoStop}%
\bibitem [{\citenamefont {Roffe}(2019)}]{QEC3}%
  \BibitemOpen
  \bibfield  {author} {\bibinfo {author} {\bibfnamefont {J.}~\bibnamefont
  {Roffe}},\ }\bibfield  {title} {\bibinfo {title} {Quantum error correction:
  an introductory guide},\ }\href
  {https://doi.org/10.1080/00107514.2019.1667078} {\bibfield  {journal}
  {\bibinfo  {journal} {Contemp. Phys.}\ }\textbf {\bibinfo {volume} {60}},\
  \bibinfo {pages} {226} (\bibinfo {year} {2019})}\BibitemShut {NoStop}%
\bibitem [{\citenamefont {Breuer}\ and\ \citenamefont
  {Petruccione}(2006)}]{DP1}%
  \BibitemOpen
  \bibfield  {author} {\bibinfo {author} {\bibfnamefont {H.-P.}\ \bibnamefont
  {Breuer}}\ and\ \bibinfo {author} {\bibfnamefont {F.}~\bibnamefont
  {Petruccione}},\ }\href
  {https://doi.org/10.1093/acprof:oso/9780199213900.001.0001} {\emph {\bibinfo
  {title} {The Theory of Open Quantum Systems}}}\ (\bibinfo {year}
  {2006})\BibitemShut {NoStop}%
\bibitem [{\citenamefont {Wiseman}\ and\ \citenamefont {Milburn}(2009)}]{DP2}%
  \BibitemOpen
  \bibfield  {author} {\bibinfo {author} {\bibfnamefont {H.~M.}\ \bibnamefont
  {Wiseman}}\ and\ \bibinfo {author} {\bibfnamefont {G.~J.}\ \bibnamefont
  {Milburn}},\ }\href {https://doi.org/10.1017/CBO9780511813948} {\emph
  {\bibinfo {title} {Quantum Measurement and Control}}}\ (\bibinfo  {publisher}
  {Cambridge University Press},\ \bibinfo {year} {2009})\BibitemShut {NoStop}%
\bibitem [{\citenamefont {Devetak}\ and\ \citenamefont
  {Shor}(2005)}]{devetak2005capacity}%
  \BibitemOpen
  \bibfield  {author} {\bibinfo {author} {\bibfnamefont {I.}~\bibnamefont
  {Devetak}}\ and\ \bibinfo {author} {\bibfnamefont {P.~W.}\ \bibnamefont
  {Shor}},\ }\bibfield  {title} {\bibinfo {title} {The capacity of a quantum
  channel for simultaneous transmission of classical and quantum information},\
  }\href {https://doi.org/10.1007/s00220-005-1317-6} {\bibfield  {journal}
  {\bibinfo  {journal} {Commun. Math. Phys.}\ }\textbf {\bibinfo {volume}
  {256}},\ \bibinfo {pages} {287} (\bibinfo {year} {2005})}\BibitemShut
  {NoStop}%
\bibitem [{\citenamefont {D'Arrigo}\ \emph {et~al.}(2007)\citenamefont
  {D'Arrigo}, \citenamefont {Benenti},\ and\ \citenamefont
  {Falci}}]{d2007quantum}%
  \BibitemOpen
  \bibfield  {author} {\bibinfo {author} {\bibfnamefont {A.}~\bibnamefont
  {D'Arrigo}}, \bibinfo {author} {\bibfnamefont {G.}~\bibnamefont {Benenti}},\
  and\ \bibinfo {author} {\bibfnamefont {G.}~\bibnamefont {Falci}},\ }\bibfield
   {title} {\bibinfo {title} {Quantum capacity of dephasing channels with
  memory},\ }\href {https://doi.org/10.1088/1367-2630/9/9/310} {\bibfield
  {journal} {\bibinfo  {journal} {New J. Phys.}\ }\textbf {\bibinfo {volume}
  {9}},\ \bibinfo {pages} {310} (\bibinfo {year} {2007})}\BibitemShut {NoStop}%
\bibitem [{\citenamefont {Br{\'a}dler}\ \emph {et~al.}(2010)\citenamefont
  {Br{\'a}dler}, \citenamefont {Hayden}, \citenamefont {Touchette},\ and\
  \citenamefont {Wilde}}]{bradler2010trade}%
  \BibitemOpen
  \bibfield  {author} {\bibinfo {author} {\bibfnamefont {K.}~\bibnamefont
  {Br{\'a}dler}}, \bibinfo {author} {\bibfnamefont {P.}~\bibnamefont {Hayden}},
  \bibinfo {author} {\bibfnamefont {D.}~\bibnamefont {Touchette}},\ and\
  \bibinfo {author} {\bibfnamefont {M.~M.}\ \bibnamefont {Wilde}},\ }\bibfield
  {title} {\bibinfo {title} {Trade-off capacities of the quantum {H}adamard
  channels},\ }\href {https://doi.org/10.1103/PhysRevA.81.062312} {\bibfield
  {journal} {\bibinfo  {journal} {Phys. Rev. A}\ }\textbf {\bibinfo {volume}
  {81}},\ \bibinfo {pages} {062312} (\bibinfo {year} {2010})}\BibitemShut
  {NoStop}%
\bibitem [{\citenamefont {Levick}\ \emph {et~al.}(2017)\citenamefont {Levick},
  \citenamefont {Kribs},\ and\ \citenamefont {Pereira}}]{levick2017quantum}%
  \BibitemOpen
  \bibfield  {author} {\bibinfo {author} {\bibfnamefont {J.}~\bibnamefont
  {Levick}}, \bibinfo {author} {\bibfnamefont {D.~W.}\ \bibnamefont {Kribs}},\
  and\ \bibinfo {author} {\bibfnamefont {R.}~\bibnamefont {Pereira}},\
  }\bibfield  {title} {\bibinfo {title} {Quantum privacy and {S}chur product
  channels},\ }\href {https://arxiv.org/abs/1709.01752} {\bibfield  {journal}
  {\bibinfo  {journal} {arXiv:1709.01752}\ } (\bibinfo {year}
  {2017})}\BibitemShut {NoStop}%
\bibitem [{\citenamefont {Chiribella}\ \emph {et~al.}(2008)\citenamefont
  {Chiribella}, \citenamefont {D'Ariano},\ and\ \citenamefont
  {Perinotti}}]{chiribella2008transforming}%
  \BibitemOpen
  \bibfield  {author} {\bibinfo {author} {\bibfnamefont {G.}~\bibnamefont
  {Chiribella}}, \bibinfo {author} {\bibfnamefont {G.~M.}\ \bibnamefont
  {D'Ariano}},\ and\ \bibinfo {author} {\bibfnamefont {P.}~\bibnamefont
  {Perinotti}},\ }\bibfield  {title} {\bibinfo {title} {Transforming quantum
  operations: Quantum supermaps},\ }\href
  {https://doi.org/10.1209/0295-5075/83/30004} {\bibfield  {journal} {\bibinfo
  {journal} {EPL}\ }\textbf {\bibinfo {volume} {83}},\ \bibinfo {pages} {30004}
  (\bibinfo {year} {2008})}\BibitemShut {NoStop}%
\bibitem [{\citenamefont {{\.Z}yczkowski}(2008)}]{Zy08}%
  \BibitemOpen
  \bibfield  {author} {\bibinfo {author} {\bibfnamefont {K.}~\bibnamefont
  {{\.Z}yczkowski}},\ }\bibfield  {title} {\bibinfo {title} {Quartic quantum
  theory: an extension of the standard quantum mechanics},\ }\href
  {https://doi.org/10.1088/1751-8113/41/35/355302/meta} {\bibfield  {journal}
  {\bibinfo  {journal} {J. Phys. A}\ }\textbf {\bibinfo {volume} {41}},\
  \bibinfo {pages} {355302} (\bibinfo {year} {2008})}\BibitemShut {NoStop}%
\bibitem [{\citenamefont {Nielsen}\ and\ \citenamefont
  {Chuang}(2010)}]{nielsen2010quantum}%
  \BibitemOpen
  \bibfield  {author} {\bibinfo {author} {\bibfnamefont {M.~A.}\ \bibnamefont
  {Nielsen}}\ and\ \bibinfo {author} {\bibfnamefont {I.~L.}\ \bibnamefont
  {Chuang}},\ }\href {https://doi.org/10.1017/CBO9780511976667} {\emph
  {\bibinfo {title} {Quantum computation and quantum information}}}\ (\bibinfo
  {publisher} {{Cambridge University Press}},\ \bibinfo {year}
  {2010})\BibitemShut {NoStop}%
\bibitem [{\citenamefont {Jamio\l{}kowski}(1972)}]{jamiolkowski1972linear}%
  \BibitemOpen
  \bibfield  {author} {\bibinfo {author} {\bibfnamefont {A.}~\bibnamefont
  {Jamio\l{}kowski}},\ }\bibfield  {title} {\bibinfo {title} {Linear
  transformations which preserve trace and positive semidefiniteness of
  operators},\ }\href {https://doi.org/10.1016/0034-4877(72)90011-0} {\bibfield
   {journal} {\bibinfo  {journal} {Rep. Math. Phys.}\ }\textbf {\bibinfo
  {volume} {3}},\ \bibinfo {pages} {275} (\bibinfo {year} {1972})}\BibitemShut
  {NoStop}%
\bibitem [{\citenamefont {Choi}(1975)}]{choi1975completely}%
  \BibitemOpen
  \bibfield  {author} {\bibinfo {author} {\bibfnamefont {M.-D.}\ \bibnamefont
  {Choi}},\ }\bibfield  {title} {\bibinfo {title} {Completely positive linear
  maps on complex matrices},\ }\href
  {https://doi.org/10.1016/0024-3795(75)90075-0} {\bibfield  {journal}
  {\bibinfo  {journal} {Linear Algebra Appl.}\ }\textbf {\bibinfo {volume}
  {10}},\ \bibinfo {pages} {285} (\bibinfo {year} {1975})}\BibitemShut
  {NoStop}%
\bibitem [{\citenamefont {Gour}(2019)}]{gour2019comparison}%
  \BibitemOpen
  \bibfield  {author} {\bibinfo {author} {\bibfnamefont {G.}~\bibnamefont
  {Gour}},\ }\bibfield  {title} {\bibinfo {title} {Comparison of quantum
  channels by superchannels},\ }\href
  {https://doi.org/10.1109/TIT.2019.2907989} {\bibfield  {journal} {\bibinfo
  {journal} {IEEE Trans. Inf. Theory}\ }\textbf {\bibinfo {volume} {65}},\
  \bibinfo {pages} {5880} (\bibinfo {year} {2019})}\BibitemShut {NoStop}%
\bibitem [{\citenamefont {Kye}(1995)}]{kye1995positive}%
  \BibitemOpen
  \bibfield  {author} {\bibinfo {author} {\bibfnamefont {S.-H.}\ \bibnamefont
  {Kye}},\ }\bibfield  {title} {\bibinfo {title} {Positive linear maps between
  matrix algebras which fix diagonals},\ }\href
  {https://doi.org/10.1016/0024-3795(93)00140-U} {\bibfield  {journal}
  {\bibinfo  {journal} {Linear Algebra Appl.}\ }\textbf {\bibinfo {volume}
  {216}},\ \bibinfo {pages} {239} (\bibinfo {year} {1995})}\BibitemShut
  {NoStop}%
\bibitem [{\citenamefont {Li}\ and\ \citenamefont
  {Woerdeman}(1997)}]{li1997special}%
  \BibitemOpen
  \bibfield  {author} {\bibinfo {author} {\bibfnamefont {C.-K.}\ \bibnamefont
  {Li}}\ and\ \bibinfo {author} {\bibfnamefont {H.~J.}\ \bibnamefont
  {Woerdeman}},\ }\bibfield  {title} {\bibinfo {title} {Special classes of
  positive and completely positive maps},\ }\href
  {https://doi.org/10.1016/S0024-3795(96)00776-8} {\bibfield  {journal}
  {\bibinfo  {journal} {Linear Algebra Appl.}\ }\textbf {\bibinfo {volume}
  {255}},\ \bibinfo {pages} {247} (\bibinfo {year} {1997})}\BibitemShut
  {NoStop}%
\bibitem [{\citenamefont {Baumgratz}\ \emph {et~al.}(2014)\citenamefont
  {Baumgratz}, \citenamefont {Cramer},\ and\ \citenamefont
  {Plenio}}]{baumgratz2014quantifying}%
  \BibitemOpen
  \bibfield  {author} {\bibinfo {author} {\bibfnamefont {T.}~\bibnamefont
  {Baumgratz}}, \bibinfo {author} {\bibfnamefont {M.}~\bibnamefont {Cramer}},\
  and\ \bibinfo {author} {\bibfnamefont {M.~B.}\ \bibnamefont {Plenio}},\
  }\bibfield  {title} {\bibinfo {title} {Quantifying coherence},\ }\href
  {https://doi.org/10.1103/PhysRevLett.113.140401} {\bibfield  {journal}
  {\bibinfo  {journal} {Phys. Rev. Lett.}\ }\textbf {\bibinfo {volume} {113}},\
  \bibinfo {pages} {140401} (\bibinfo {year} {2014})}\BibitemShut {NoStop}%
\bibitem [{\citenamefont {Streltsov}\ \emph {et~al.}(2017)\citenamefont
  {Streltsov}, \citenamefont {Adesso},\ and\ \citenamefont
  {Plenio}}]{streltsov2017colloquium}%
  \BibitemOpen
  \bibfield  {author} {\bibinfo {author} {\bibfnamefont {A.}~\bibnamefont
  {Streltsov}}, \bibinfo {author} {\bibfnamefont {G.}~\bibnamefont {Adesso}},\
  and\ \bibinfo {author} {\bibfnamefont {M.~B.}\ \bibnamefont {Plenio}},\
  }\bibfield  {title} {\bibinfo {title} {Colloquium: Quantum coherence as a
  resource},\ }\href {https://doi.org/10.1103/RevModPhys.89.041003} {\bibfield
  {journal} {\bibinfo  {journal} {Rev. Mod. Phys}\ }\textbf {\bibinfo {volume}
  {89}},\ \bibinfo {pages} {041003} (\bibinfo {year} {2017})}\BibitemShut
  {NoStop}%
\bibitem [{\citenamefont {Jozsa}\ and\ \citenamefont
  {Schlienz}(2000)}]{jozsa2000distinguishability}%
  \BibitemOpen
  \bibfield  {author} {\bibinfo {author} {\bibfnamefont {R.}~\bibnamefont
  {Jozsa}}\ and\ \bibinfo {author} {\bibfnamefont {J.}~\bibnamefont
  {Schlienz}},\ }\bibfield  {title} {\bibinfo {title} {Distinguishability of
  states and von neumann entropy},\ }\href
  {https://doi.org/10.1103/PhysRevA.62.012301} {\bibfield  {journal} {\bibinfo
  {journal} {Phys. Rev. A}\ }\textbf {\bibinfo {volume} {62}},\ \bibinfo
  {pages} {012301} (\bibinfo {year} {2000})}\BibitemShut {NoStop}%
\bibitem [{\citenamefont {Peres}(1996)}]{peres1996separability}%
  \BibitemOpen
  \bibfield  {author} {\bibinfo {author} {\bibfnamefont {A.}~\bibnamefont
  {Peres}},\ }\bibfield  {title} {\bibinfo {title} {Separability criterion for
  density matrices},\ }\href {https://doi.org/10.1103/PhysRevLett.77.1413}
  {\bibfield  {journal} {\bibinfo  {journal} {Phys. Rev. Lett.}\ }\textbf
  {\bibinfo {volume} {77}},\ \bibinfo {pages} {1413} (\bibinfo {year}
  {1996})}\BibitemShut {NoStop}%
\bibitem [{\citenamefont {Horodecki}\ \emph {et~al.}(1996)\citenamefont
  {Horodecki}, \citenamefont {Horodecki},\ and\ \citenamefont
  {Horodecki}}]{horodecki2001separability}%
  \BibitemOpen
  \bibfield  {author} {\bibinfo {author} {\bibfnamefont {M.}~\bibnamefont
  {Horodecki}}, \bibinfo {author} {\bibfnamefont {P.}~\bibnamefont
  {Horodecki}},\ and\ \bibinfo {author} {\bibfnamefont {R.}~\bibnamefont
  {Horodecki}},\ }\bibfield  {title} {\bibinfo {title} {Separability of mixed
  states: Necessary and sufficient conditions},\ }\href
  {https://doi.org/10.1016/S0375-9601(96)00706-2} {\bibfield  {journal}
  {\bibinfo  {journal} {Phys. Lett. A}\ }\textbf {\bibinfo {volume} {223}},\
  \bibinfo {pages} {1} (\bibinfo {year} {1996})}\BibitemShut {NoStop}%
\bibitem [{\citenamefont {{\AA}berg}(2006)}]{aberg2006superposition}%
  \BibitemOpen
  \bibfield  {author} {\bibinfo {author} {\bibfnamefont {J.}~\bibnamefont
  {{\AA}berg}},\ }\bibfield  {title} {\bibinfo {title} {{Quantifying
  superposition}},\ }\href {https://arxiv.org/abs/quant-ph/0612146} {\bibfield
  {journal} {\bibinfo  {journal} {arXiv:0612146}\ } (\bibinfo {year}
  {2006})}\BibitemShut {NoStop}%
\bibitem [{\citenamefont {Mani}\ and\ \citenamefont
  {Karimipour}(2015)}]{mani2015cohering}%
  \BibitemOpen
  \bibfield  {author} {\bibinfo {author} {\bibfnamefont {A.}~\bibnamefont
  {Mani}}\ and\ \bibinfo {author} {\bibfnamefont {V.}~\bibnamefont
  {Karimipour}},\ }\bibfield  {title} {\bibinfo {title} {Cohering and
  decohering power of quantum channels},\ }\href
  {https://doi.org/10.1103/PhysRevA.92.032331} {\bibfield  {journal} {\bibinfo
  {journal} {Phys. Rev. A}\ }\textbf {\bibinfo {volume} {92}},\ \bibinfo
  {pages} {032331} (\bibinfo {year} {2015})}\BibitemShut {NoStop}%
\bibitem [{\citenamefont {Wang}\ and\ \citenamefont {Renner}(2012)}]{wang10}%
  \BibitemOpen
  \bibfield  {author} {\bibinfo {author} {\bibfnamefont {L.}~\bibnamefont
  {Wang}}\ and\ \bibinfo {author} {\bibfnamefont {R.}~\bibnamefont {Renner}},\
  }\bibfield  {title} {\bibinfo {title} {{One-shot classical-quantum capacity
  and hypothesis testing}},\ }\href
  {https://doi.org/10.1103/PhysRevLett.108.200501} {\bibfield  {journal}
  {\bibinfo  {journal} {Phys. Rev. Lett.}\ }\textbf {\bibinfo {volume} {108}},\
  \bibinfo {pages} {200501} (\bibinfo {year} {2012})}\BibitemShut {NoStop}%
\bibitem [{\citenamefont {Buscemi}\ and\ \citenamefont
  {Datta}(2010)}]{buscemi2010quantum}%
  \BibitemOpen
  \bibfield  {author} {\bibinfo {author} {\bibfnamefont {F.}~\bibnamefont
  {Buscemi}}\ and\ \bibinfo {author} {\bibfnamefont {N.}~\bibnamefont
  {Datta}},\ }\bibfield  {title} {\bibinfo {title} {The quantum capacity of
  channels with arbitrarily correlated noise},\ }\href
  {https://doi.org/10.1109/TIT.2009.2039166} {\bibfield  {journal} {\bibinfo
  {journal} {IEEE Trans. Inf. Theory}\ }\textbf {\bibinfo {volume} {56}},\
  \bibinfo {pages} {1447} (\bibinfo {year} {2010})}\BibitemShut {NoStop}%
\bibitem [{\citenamefont {Brand\~ao}\ and\ \citenamefont
  {Datta}(2011)}]{brandao2011one}%
  \BibitemOpen
  \bibfield  {author} {\bibinfo {author} {\bibfnamefont {F.~G.}\ \bibnamefont
  {Brand\~ao}}\ and\ \bibinfo {author} {\bibfnamefont {N.}~\bibnamefont
  {Datta}},\ }\bibfield  {title} {\bibinfo {title} {One-shot rates for
  entanglement manipulation under non-entangling maps},\ }\href
  {https://doi.org/10.1109/TIT.2011.2104531} {\bibfield  {journal} {\bibinfo
  {journal} {IEEE Trans. Inf. Theory}\ }\textbf {\bibinfo {volume} {57}},\
  \bibinfo {pages} {1754} (\bibinfo {year} {2011})}\BibitemShut {NoStop}%
\bibitem [{\citenamefont {Piani}\ \emph {et~al.}(2016)\citenamefont {Piani},
  \citenamefont {Cianciaruso}, \citenamefont {Bromley}, \citenamefont {Napoli},
  \citenamefont {Johnston},\ and\ \citenamefont
  {Adesso}}]{piani2016robustness}%
  \BibitemOpen
  \bibfield  {author} {\bibinfo {author} {\bibfnamefont {M.}~\bibnamefont
  {Piani}}, \bibinfo {author} {\bibfnamefont {M.}~\bibnamefont {Cianciaruso}},
  \bibinfo {author} {\bibfnamefont {T.~R.}\ \bibnamefont {Bromley}}, \bibinfo
  {author} {\bibfnamefont {C.}~\bibnamefont {Napoli}}, \bibinfo {author}
  {\bibfnamefont {N.}~\bibnamefont {Johnston}},\ and\ \bibinfo {author}
  {\bibfnamefont {G.}~\bibnamefont {Adesso}},\ }\bibfield  {title} {\bibinfo
  {title} {Robustness of asymmetry and coherence of quantum states},\ }\href
  {https://doi.org/10.1103/PhysRevA.93.042107} {\bibfield  {journal} {\bibinfo
  {journal} {Phys. Rev. A}\ }\textbf {\bibinfo {volume} {93}},\ \bibinfo
  {pages} {042107} (\bibinfo {year} {2016})}\BibitemShut {NoStop}%
\bibitem [{\citenamefont {Takagi}\ and\ \citenamefont {Regula}(2019)}]{TR}%
  \BibitemOpen
  \bibfield  {author} {\bibinfo {author} {\bibfnamefont {R.}~\bibnamefont
  {Takagi}}\ and\ \bibinfo {author} {\bibfnamefont {B.}~\bibnamefont
  {Regula}},\ }\bibfield  {title} {\bibinfo {title} {General resource theories
  in quantum mechanics and beyond: Operational characterization via
  discrimination tasks},\ }\href {https://doi.org/10.1103/PhysRevX.9.031053}
  {\bibfield  {journal} {\bibinfo  {journal} {Phys. Rev. X}\ }\textbf {\bibinfo
  {volume} {9}},\ \bibinfo {pages} {031053} (\bibinfo {year}
  {2019})}\BibitemShut {NoStop}%
\bibitem [{\citenamefont {Yokojima}\ \emph {et~al.}(2021)\citenamefont
  {Yokojima}, \citenamefont {Quintino}, \citenamefont {Soeda},\ and\
  \citenamefont {Murao}}]{Yokojima2021}%
  \BibitemOpen
  \bibfield  {author} {\bibinfo {author} {\bibfnamefont {W.}~\bibnamefont
  {Yokojima}}, \bibinfo {author} {\bibfnamefont {M.~T.}\ \bibnamefont
  {Quintino}}, \bibinfo {author} {\bibfnamefont {A.}~\bibnamefont {Soeda}},\
  and\ \bibinfo {author} {\bibfnamefont {M.}~\bibnamefont {Murao}},\ }\bibfield
   {title} {\bibinfo {title} {Consequences of preserving reversibility in
  quantum superchannels},\ }\href {https://doi.org/10.22331/q-2021-04-26-441}
  {\bibfield  {journal} {\bibinfo  {journal} {{Quantum}}\ }\textbf {\bibinfo
  {volume} {5}},\ \bibinfo {pages} {441} (\bibinfo {year} {2021})}\BibitemShut
  {NoStop}%
\bibitem [{\citenamefont {Chen}\ \emph {et~al.}(2018)\citenamefont {Chen},
  \citenamefont {Sun}, \citenamefont {Saha}, \citenamefont {Jaskula},\ and\
  \citenamefont {Cappellaro}}]{Chen2018}%
  \BibitemOpen
  \bibfield  {author} {\bibinfo {author} {\bibfnamefont {M.}~\bibnamefont
  {Chen}}, \bibinfo {author} {\bibfnamefont {W.~K.~C.}\ \bibnamefont {Sun}},
  \bibinfo {author} {\bibfnamefont {K.}~\bibnamefont {Saha}}, \bibinfo {author}
  {\bibfnamefont {J.-C.}\ \bibnamefont {Jaskula}},\ and\ \bibinfo {author}
  {\bibfnamefont {P.}~\bibnamefont {Cappellaro}},\ }\bibfield  {title}
  {\bibinfo {title} {Protecting solid-state spins from a strongly coupled
  environment},\ }\href {https://doi.org/10.1088/1367-2630/aac542} {\bibfield
  {journal} {\bibinfo  {journal} {New J. Phys.}\ }\textbf {\bibinfo {volume}
  {20}},\ \bibinfo {pages} {063011} (\bibinfo {year} {2018})}\BibitemShut
  {NoStop}%
\bibitem [{\citenamefont {Layden}\ \emph {et~al.}(2020)\citenamefont {Layden},
  \citenamefont {Chen},\ and\ \citenamefont {Cappellaro}}]{Layden2020}%
  \BibitemOpen
  \bibfield  {author} {\bibinfo {author} {\bibfnamefont {D.}~\bibnamefont
  {Layden}}, \bibinfo {author} {\bibfnamefont {M.}~\bibnamefont {Chen}},\ and\
  \bibinfo {author} {\bibfnamefont {P.}~\bibnamefont {Cappellaro}},\ }\bibfield
   {title} {\bibinfo {title} {Efficient quantum error correction of dephasing
  induced by a common fluctuator},\ }\href
  {https://doi.org/10.1103/PhysRevLett.124.020504} {\bibfield  {journal}
  {\bibinfo  {journal} {Phys. Rev. Lett.}\ }\textbf {\bibinfo {volume} {124}},\
  \bibinfo {pages} {020504} (\bibinfo {year} {2020})}\BibitemShut {NoStop}%
\bibitem [{\citenamefont {Dag}\ \emph {et~al.}(2019)\citenamefont {Dag},
  \citenamefont {Niedenzu}, \citenamefont {Ozaydin}, \citenamefont
  {M\"{u}stecapl{\i}oglu},\ and\ \citenamefont {Kurizki}}]{dag2019temperature}%
  \BibitemOpen
  \bibfield  {author} {\bibinfo {author} {\bibfnamefont {C.~B.}\ \bibnamefont
  {Dag}}, \bibinfo {author} {\bibfnamefont {W.}~\bibnamefont {Niedenzu}},
  \bibinfo {author} {\bibfnamefont {F.}~\bibnamefont {Ozaydin}}, \bibinfo
  {author} {\bibfnamefont {O.~E.}\ \bibnamefont {M\"{u}stecapl{\i}oglu}},\ and\
  \bibinfo {author} {\bibfnamefont {G.}~\bibnamefont {Kurizki}},\ }\bibfield
  {title} {\bibinfo {title} {Temperature control in dissipative cavities by
  entangled dimers},\ }\href {https://doi.org/10.1021/acs.jpcc.8b11445}
  {\bibfield  {journal} {\bibinfo  {journal} {J. Phys. Chem. C}\ }\textbf
  {\bibinfo {volume} {123}},\ \bibinfo {pages} {4035} (\bibinfo {year}
  {2019})}\BibitemShut {NoStop}%
\bibitem [{\citenamefont {Wehner}\ \emph {et~al.}(2018)\citenamefont {Wehner},
  \citenamefont {Elkouss},\ and\ \citenamefont {Hanson}}]{Wehner18}%
  \BibitemOpen
  \bibfield  {author} {\bibinfo {author} {\bibfnamefont {S.}~\bibnamefont
  {Wehner}}, \bibinfo {author} {\bibfnamefont {D.}~\bibnamefont {Elkouss}},\
  and\ \bibinfo {author} {\bibfnamefont {R.}~\bibnamefont {Hanson}},\
  }\bibfield  {title} {\bibinfo {title} {Quantum internet: A vision for the
  road ahead},\ }\href
  {https://science.sciencemag.org/content/362/6412/eaam9288} {\bibfield
  {journal} {\bibinfo  {journal} {Science}\ }\textbf {\bibinfo {volume} {362}}
  (\bibinfo {year} {2018})}\BibitemShut {NoStop}%
\bibitem [{\citenamefont {Chitambar}\ and\ \citenamefont
  {Gour}(2019)}]{Gour2019}%
  \BibitemOpen
  \bibfield  {author} {\bibinfo {author} {\bibfnamefont {E.}~\bibnamefont
  {Chitambar}}\ and\ \bibinfo {author} {\bibfnamefont {G.}~\bibnamefont
  {Gour}},\ }\bibfield  {title} {\bibinfo {title} {Quantum resource theories},\
  }\href {https://doi.org/10.1103/RevModPhys.91.025001} {\bibfield  {journal}
  {\bibinfo  {journal} {Rev. Mod. Phys.}\ }\textbf {\bibinfo {volume} {91}},\
  \bibinfo {pages} {025001} (\bibinfo {year} {2019})}\BibitemShut {NoStop}%
\bibitem [{\citenamefont {Liu}\ and\ \citenamefont {Yuan}(2020)}]{Liu2020}%
  \BibitemOpen
  \bibfield  {author} {\bibinfo {author} {\bibfnamefont {Y.}~\bibnamefont
  {Liu}}\ and\ \bibinfo {author} {\bibfnamefont {X.}~\bibnamefont {Yuan}},\
  }\bibfield  {title} {\bibinfo {title} {Operational resource theory of quantum
  channels},\ }\href {https://doi.org/10.1103/PhysRevResearch.2.012035}
  {\bibfield  {journal} {\bibinfo  {journal} {Phys. Rev. Research}\ }\textbf
  {\bibinfo {volume} {2}},\ \bibinfo {pages} {012035} (\bibinfo {year}
  {2020})}\BibitemShut {NoStop}%
\bibitem [{\citenamefont {Li}\ \emph {et~al.}(2020)\citenamefont {Li},
  \citenamefont {Bu},\ and\ \citenamefont {Liu}}]{Li2020}%
  \BibitemOpen
  \bibfield  {author} {\bibinfo {author} {\bibfnamefont {L.}~\bibnamefont
  {Li}}, \bibinfo {author} {\bibfnamefont {K.}~\bibnamefont {Bu}},\ and\
  \bibinfo {author} {\bibfnamefont {Z.-W.}\ \bibnamefont {Liu}},\ }\bibfield
  {title} {\bibinfo {title} {Quantifying the resource content of quantum
  channels: An operational approach},\ }\href
  {https://doi.org/10.1103/PhysRevA.101.022335} {\bibfield  {journal} {\bibinfo
   {journal} {Phys. Rev. A}\ }\textbf {\bibinfo {volume} {101}},\ \bibinfo
  {pages} {022335} (\bibinfo {year} {2020})}\BibitemShut {NoStop}%
\bibitem [{\citenamefont {Gour}\ and\ \citenamefont
  {Scandolo}(2020)}]{Carlo2020}%
  \BibitemOpen
  \bibfield  {author} {\bibinfo {author} {\bibfnamefont {G.}~\bibnamefont
  {Gour}}\ and\ \bibinfo {author} {\bibfnamefont {C.~M.}\ \bibnamefont
  {Scandolo}},\ }\bibfield  {title} {\bibinfo {title} {Dynamical
  entanglement},\ }\href {https://doi.org/10.1103/PhysRevLett.125.180505}
  {\bibfield  {journal} {\bibinfo  {journal} {Phys. Rev. Lett.}\ }\textbf
  {\bibinfo {volume} {125}},\ \bibinfo {pages} {180505} (\bibinfo {year}
  {2020})}\BibitemShut {NoStop}%
\bibitem [{\citenamefont {Liu}\ and\ \citenamefont
  {Winter}(2019)}]{Winter2019}%
  \BibitemOpen
  \bibfield  {author} {\bibinfo {author} {\bibfnamefont {Z.-W.}\ \bibnamefont
  {Liu}}\ and\ \bibinfo {author} {\bibfnamefont {A.}~\bibnamefont {Winter}},\
  }\href@noop {} {\bibinfo {title} {Resource theories of quantum channels and
  the universal role of resource erasure}} (\bibinfo {year} {2019}),\ \Eprint
  {https://arxiv.org/abs/1904.04201} {arXiv:1904.04201 [quant-ph]} \BibitemShut
  {NoStop}%
\end{thebibliography}%

\end{document}